\documentclass[11pt,a4paper]{article}
\usepackage[utf8]{inputenc}
\usepackage{algpseudocode}
\usepackage{geometry}
\usepackage{microtype}
\usepackage{hyperref}
\usepackage[T1]{fontenc}
\usepackage{lmodern}
\usepackage{amsmath,amssymb,amsthm,mathtools}
\usepackage{thmtools}
\usepackage{algorithm}
\usepackage{cleveref}
\usepackage{enumitem}
\usepackage{subcaption}
\usepackage{csquotes}
\usepackage[table]{xcolor}
\usepackage{xspace}

\DeclareMathOperator*{\argmin}{arg\,min}

\newtheorem{theorem}{Theorem}[section]
\newtheorem{lemma}[theorem]{Lemma}
\newtheorem{corollary}[theorem]{Corollary}
\newtheorem{problem}[theorem]{Problem}

\newtheorem{observation}[theorem]{Observation}

\newtheorem{fact}[theorem]{Fact}
\theoremstyle{definition}

\theoremstyle{remark}

\Crefname{theorem}{Theorem}{Theorems}
\Crefname{problem}{Problem}{Problems}
\Crefname{definition}{Definition}{Definitions}
\Crefname{lemma}{Lemma}{Lemmata}
\Crefname{corollary}{Corollary}{Corollaries}
\Crefname{claim}{Claim}{Claims}
\Crefname{observation}{Observation}{Observations}
\Crefname{proposition}{Proposition}{Propositions}
\Crefname{algorithmm}{Algorithm}{Algorithms}
\Crefname{remark}{Remark}{Remarks}
\crefname{algo}{algorithm}{algorithms}
\Crefname{algo}{Algorithm}{Algorithms}
\crefname{claim}{claim}{claims}
\Crefname{claim}{Claim}{Claims}
\Crefname{invariant}{Invariant}{Invariants}
\Crefname{question}{Question}{Questions}

\newcommand{\vol}{\operatorname{vol}}
\newcommand{\StrongED}{\textnormal{\textsc{ED}}}
\newcommand{\ThresholdGabow}{\textnormal{\textsc{ThresholdGabow}}\xspace}

\newcommand{\N}{\mathbb{N}}

\newcommand{\boundary}{\partial}
\newcommand{\floor}[1]{\left\lfloor #1 \right\rfloor}
\newcommand{\ceil}[1]{\left\lceil #1 \right\rceil}
\newcommand{\card}[1]{\left| #1 \right|}
\newcommand{\Proba}{\mathbb P}
\newcommand{\Expect}{\mathbb E}
\newcommand{\union}{\bigcup}
\newcommand{\inter}{\bigcap}
\newcommand{\one}{\mathbf{1}}

\definecolor{HighlightY}{cmyk}{0,0,0.2,0}
\definecolor{HighlightR}{cmyk}{0,0.12,0.08,0.00}
\definecolor{HighlightG}{cmyk}{0.2,0.0,0.2,0.00}

\title{Can LLMs be Used to Simplify Algorithms? Simpler Algorithms for Vertex Coloring and Edge Connectivity}
\author{Antoine El-Hayek \and Monika Henzinger \and Da Wei Zheng\\
Institute of Science and Technology Austria}
\date{\today}

\begin{document}

\maketitle

\begin{abstract}
Having simple algorithms is important for the practical adoption of new algorithms. However, simplifying existing algorithms is a field that does not usually receive a lot of attention from the theoretical computer science community. It also seems like a task that LLMs might perform well. Thus, in this paper we study how well LLMs can simplify algorithms by evaluating three different LLMs on ten different algorithmic problems. Our study resulted in the discovery of two novel algorithms.
The first algorithm is for vertex coloring, and gives a refined bound for the so-called asymmetric palette sparsification proposed by Assadi and Yazdanyar~[SOSA 2025] with a very simple proof.
The second is a further simplification of the algorithm of Saranurak~[SOSA 2021] for deterministically computing a global minimum cut in an unweighted graph using expanders.
\end{abstract}
\section{Introduction}
Large language models (LLMs) are increasingly being incorporated into how research is conducted, including in theoretical computer science. Their role has expanded from generating ideas and checking existing proofs to proposing new algorithms and giving rigorous correctness arguments.
One of the research areas of theoretical computer science that could be an ideal application of LLMs is the \textit{design of simple algorithms}, a field that has been chronically neglected by the research community.
Ideally, after an (arbitrarily complicated) efficient algorithm has been designed, this algorithm could be given to an LLM to produce a simpler, almost as efficient algorithm that can be implemented and put to practical use. Simpler algorithms also make results easier to understand, generalize, and combine with other techniques, thereby enabling further progress.

In this paper we evaluate how well the current commercially available LLMs work in simplifying algorithms. Specifically, this paper has two contributions: (1) an experimental part (Section~\ref{sec:empirical}) where we report on an empirical study that we performed, and
(2) a theoretical part (Sections~\ref{sec:coloring} and~\ref{sec:edgeconn})
that presents the two success stories from our study: Two simplified algorithms, in one case also an improvement. Note that we rewrote the outputs for readability and independently verified all proofs given in these sections.

\textbf{Empirical evaluation.} For the empirical study we selected ten algorithmic questions in our fields of expertise and for each question we gave (a) the PDF of a research paper containing a complicated algorithm and (b) a prompt to the LLM asking it to generate a simpler algorithm.
More specifically, we used
two variants of a prompt to the LLM: a strict prompt that contains a paragraph that actively discourages the LLM from searching online for answers, and a non-strict prompt that omits this paragraph. We asked both prompts for all ten algorithmic problems to three  commercial LLMs, namely Claude Fable, ChatGPT Pro, and Gemini Ultra, in July 2026 to simplify the algorithm. Then we read each answer and assigned it to one of six categories depending on its content and usefulness. The results of this study are given in Section~\ref{sec:empirical}.
 
\textbf{Theoretical results.} During our empirical evaluation we discovered  three simplifications, of which two are novel simple algorithms showing the following results. Let $G=(V,E)$ be a graph with $n$ vertices and $m$ edges.

\textsl{Vertex coloring.} A \textsl{vertex coloring} of  $G$ is a coloring of its vertices such that no two adjacent vertices are colored with the same color.
Let $\Delta$ be the maximum degree of the graph. The goal is to efficiently and without any information about the graph except for its maximum degree find, for each vertex, a list $L(v)$ of colors from $[\Delta+1]$, such that one can find a \emph{legal} vertex coloring of $G$, that is, by assigning every vertex $v$ a color from $L(v)$. 
Assadi, Chen, and Khanna~\cite{DBLP:conf/soda/AssadiCK19} gave a simple algorithm with $\card{L(v)} = O(\log n)$. The proof they give is quite involved. A recent paper by Assadi and Yazdanyar~\cite{DBLP:journals/theoretics/AssadiY26}, not provided to the LLMs, gives a simple algorithm with a simple proof with $\card{L(v)} = O(\log^2 n)$ on average. We present a simple algorithm devised by Claude with a simple proof that improves the average size of the lists ${L(v)}$ to $ O(\log n \log \Delta)$. Our algorithm, similarly to the algorithm of Assadi and Yazdanyar~\cite{DBLP:journals/theoretics/AssadiY26}, also assigns a suitably chosen random number $X_v$ to each vertex $v$. 
This allows a simple greedy algorithm applied to the randomly permuted vertex set, with high probability, to determine a legal vertex coloring such that the color of each vertex $v$ belongs to $L(v)$. More specifically, the greedy algorithm works as follows: As long as no failure is encountered, choose a vertex $v\in \argmin\{X_u: u\text{ is uncolored}\}$, and look for a color in $L(v)$ that is unused by any of the neighbors of $v$. If such a color exists, color the vertex with it. Otherwise, fail.

\begin{theorem}
    There exists a simple algorithm with a simple proof that given a maximum degree $\Delta$, a
   predetermined non-accessible graph $G$ with  $n$ vertices and maximum degree $\Delta$, finds one list $L(v) \subseteq [\Delta+1]$ per vertex $v$, of average size $O(\log n \log \Delta)$, together with a permutation of the vertices, such that the greedy algorithm, where the vertices are processed in the permutation order, correctly colors the vertices of $G$.
\end{theorem}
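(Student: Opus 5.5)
The plan is to give each vertex a random threshold $X_v$ and a random list $L(v)$ whose size depends only on $X_v$. Vertices with small $X_v$ are processed early, when few neighbors are colored, so they can afford short lists. Vertices with large $X_v$ get long lists, up to the whole palette $[\Delta+1]$.

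\textbf{Algorithm.} Independently for every vertex $v$, draw $X_v$ uniformly from $[0,1]$. Set
\[
k(x)=\Delta+1 \ \text{ if } (1-x)(\Delta+1)\le C\log n, \qquad k(x)=\ceil{\tfrac{C\log n}{1-x}} \ \text{ otherwise},
\]
for a large constant $C$. Let $L(v)$ be a uniformly random subset of $[\Delta+1]$ of size $k(X_v)$, drawn independently of everything else. The permutation is the order of increasing $X_v$.

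\textbf{Average list size.} The expected list size is at most
\[
\int_0^1 \min\Bigl(\Delta+1,\tfrac{C\log n}{1-x}+1\Bigr)\,dx .
\]
On $1-x\le C\log n/(\Delta+1)$ the integrand is $\Delta+1$, contributing $O(\log n)$. The remaining part integrates $C\log n/(1-x)$ over $1-x\in[C\log n/(\Delta+1),1]$, which gives $C\log n\cdot \ln\frac{\Delta+1}{C\log n}=O(\log n\log\Delta)$. By linearity the average size over the $n$ vertices is $O(\log n\log\Delta)$ in expectation. If a worst-case bound is wanted, Markov's inequality or repetition gives one with constant probability.

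\textbf{Correctness.} Since the graph is fixed in advance (oblivious), we argue per vertex and take a union bound over the $n$ vertices. Fix $v$ with degree $d\le\Delta$ and condition on $X_v=x$. When $v$ is processed, its colored neighbors are exactly those $u$ with $X_u<x$. The colors they receive are a function of the $X$'s and lists of vertices other than $v$, so they are independent of $L(v)$.

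Let $A$ be the set of colors not used by any earlier neighbor. Then $|A|\ge \Delta+1-d+N$, where $N\sim\mathrm{Bin}(d,1-x)$ counts the later neighbors. If $k(x)=\Delta+1$, then $|A|\ge1$ and $v$ always succeeds. Otherwise $(1-x)(\Delta+1)>C\log n$, and the main step is to show that $|A|\ge (1-x)(\Delta+1)/4$ with probability $1-n^{-3}$. We split into two cases.
\begin{itemize}
\item If $\Delta+1-d\ge(1-x)(\Delta+1)/4$, the claim is deterministic.
\item Otherwise $d\ge \frac34(\Delta+1)$, so $\Expect[N]=(1-x)d\ge \frac34 C\log n$. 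A Chernoff bound then gives $N\ge (1-x)d/2\ge (1-x)(\Delta+1)/4$ except with probability $n^{-3}$ for $C$ large.
\end{itemize}

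Given $|A|\ge (1-x)(\Delta+1)/4$, each of the $k(x)$ colors of $L(v)$, sampled without replacement, misses $A$ with probability at most $1-\frac{1-x}{4}$. Hence
\[
\Proba[L(v)\cap A=\emptyset]\le \Bigl(1-\tfrac{1-x}{4}\Bigr)^{C\log n/(1-x)}\le e^{-C\log n/4}\le n^{-3}.
\]
A union bound over all vertices shows the greedy algorithm never fails with high probability.

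The main obstacle is this concentration step for the number of available colors. It has to handle vertices of degree well below $\Delta$, which is why we use the term $\Delta+1-d$ in the case split rather than only $\mathrm{Bin}(d,1-x)$. It also has to ensure that the independence of $L(v)$ from the past is stated correctly; the cap $k=\Delta+1$ handles the regime where Chernoff bounds are too weak.
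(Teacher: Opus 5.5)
Your proposal is correct and is essentially the paper's proof. Your uniform priority $X_v$ with list size $\approx C\log n/(1-X_v)$ is the continuous version of the paper's geometric levels: $\ell=\lfloor\log_2\frac{1}{1-X_v}\rfloor$, capped at $\tau$, has exactly the paper's level distribution, and $2^{\ell}\approx 1/(1-X_v)$. The analysis also matches: your case split on $\Delta+1-d$ versus the number of later neighbours, followed by a Chernoff bound and the without-replacement sampling estimate, is the paper's argument via the good event $\mathcal E_v$.
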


\textsl{Global minimum cut in unweighted static and dynamic graphs.}
The \textsl{minimum cut} in an unweighted graph is an edge set of minimum cardinality such that its removal disconnects the graph. 
Kawarabayashi and Thorup gave the first deterministic $\tilde O(m)$ time algorithm for global minimum cut in a simple graph by contracting vertex sets while preserving every nontrivial minimum cut \cite{KawarabayashiThorup2018}.  Saranurak later isolated a conceptually simpler algorithm based on one expander decomposition, trimming (which needs to be applied repeatedly per vertex), shaving (which is applied once per vertex), simultaneous contraction, and Gabow's edge-connectivity algorithm \cite{Saranurak2021EdgeConnectivity}. It takes time $m^{1+o(1)}.$ We  present a further simplification of this algorithm: trimming and shaving are replaced by one ``harsher'' shaving step (which is applied once per vertex) giving a further simplification to Saranurak's algorithm and proving the following theorem.

\begin{theorem}\label{thm:conn}
There is a simple deterministic algorithm that, given a simple, undirected, unweighted graph, returns an explicit global minimum cut in $m^{1+o(1)}$ worst-case time. 
\end{theorem}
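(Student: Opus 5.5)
The plan follows Saranurak's framework, replacing trimming plus shaving by a single harsher shaving step. Let $\delta$ be the minimum degree and $\lambda$ the edge connectivity.

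\textbf{Step 1: preliminary computations.} First compute $\delta$. Use a deterministic expander decomposition with parameter $\phi = 1/(\delta^{\,}\cdot m^{o(1)})$-scale; concretely, take $\phi$ of order $m^{-o(1)}/\delta$ for the unweighted setting, as in Saranurak's paper. This decomposition takes $m^{1+o(1)}$ time, yields clusters $U_1,\dots,U_k$ with each $G[U_i]$ a $\phi$-expander, and leaves $O(\phi m \log^{O(1)} m)=\tilde O(m/\delta)\cdot m^{o(1)}$ inter-cluster edges.

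\textbf{Step 2: harsher shaving.} For each cluster $U$, define the core
\[
\mathrm{core}(U) = \{v\in U : \deg_{G[U]}(v) \ge \deg_G(v) - \delta/c\}
\]
for a suitable constant $c$, say $c=10$. This replaces ``degree at least $2\deg/5$ inside $U$'' trimming and the $+1$ shave. Contract each core to a single vertex, leaving all other vertices as singletons, and call the result $H$. On $H$, run Gabow's algorithm with the threshold $\delta$; it runs in $\tilde O(|E(H)|\,\delta)$ time. Return the better of the trivial cut $\min_v \deg(v)$ and the cut found in $H$, expanded back to an explicit edge set of $G$.

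\textbf{Step 3: correctness.} The key lemma is that every nontrivial minimum cut $(S,V\setminus S)$ with $\lambda<\delta$ does not cut any core. I would argue this as follows. In the expander $G[U]$, the side $S\cap U$ has small volume, since $\phi\,\vol(S\cap U)\le |\partial S|<\delta$. Hence $\vol_{G[U]}(S\cap U) < \delta/\phi$, which is still small relative to what is needed.

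The main obstacle is to show that $S\cap \mathrm{core}(U)=\emptyset$ (or the symmetric statement) without the iterative trimming. For this I would use the simple-graph argument. Any side $S$ of a nontrivial cut with $|\partial S|<\delta$ has $|S|>\delta/2$, because $|S|(\delta-|S|+1)\le|\partial S|$. Now suppose $v\in S\cap\mathrm{core}(U)$. Then $v$ has at most $\delta/c$ neighbors outside $U$, so $S\cap U$ must absorb most of $v$'s neighborhood, and every core vertex of $S$ in $U$ contributes cut edges unless its neighbors lie in $S$.

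I would combine the following counting:
\begin{itemize}[noitemsep]
\item each core vertex in $S\cap U$ has at least $\deg(v)-\delta/c-|S\cap U|$ edges leaving $S\cap U$ within $U$, or outside;
\item the expansion bound gives $|\partial_{G[U]}(S\cap U)|\ge\phi\,\vol(S\cap U)$;
\item a total of $<\delta$ cut edges.
\end{itemize}
Together these force $|S\cap U|$ to be either $\ge \delta(1-2/c)$, contradicting the volume bound once $\phi\gg 1/\delta$, or below $\delta/c$. In the latter case the core vertex alone exposes more than $\delta$ cut edges, again a contradiction. The delicate point is choosing $\phi$ and $c$ so that both cases close simultaneously. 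If a single threshold fails, I would fall back to the known shaving lemma with degree-$1/2$ threshold and verify the case split directly.

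\textbf{Step 4: running time.} Bound $|E(H)|$. Edges in $H$ are either inter-cluster, numbering $\tilde O(m/\delta)m^{o(1)}$, or incident to non-core vertices. Each non-core vertex has more than $\delta/c$ edges leaving $U$, so there are at most $O(\text{inter-cluster}\cdot c/\delta)$ such vertices, and their incident edges number $O(m^{1+o(1)}/\delta)$. Hence Gabow's algorithm takes $m^{1+o(1)}$ time in total, and the explicit cut is recovered by uncontracting.
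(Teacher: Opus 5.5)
Your proposal has two genuine gaps: one in the running-time bound and one in the correctness lemma.

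\textbf{Running time.} Step 4 fails for your core. Your threshold is additive: a vertex is non-core as soon as it has more than $\delta/c$ edges leaving $U$. This correctly gives $O(c|B|/\delta)$ non-core vertices. But nothing bounds their degrees, and every edge at a non-core vertex survives in $H$. So your conclusion that their incident edges number $O(m^{1+o(1)}/\delta)$ does not follow.

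For instance, let each cluster be a copy of $K_{\delta,\delta^2}$ whose $\delta$ hubs each get $\lceil\delta/10\rceil+1$ extra edges to hubs of other clusters. The minimum degree is $\delta$, and each cluster has constant conductance even with $G$-volumes. The boundary is only a $\Theta(1/\delta)$-fraction of all edges, so this partition satisfies the decomposition guarantees. Yet every hub is non-core, so $|E(H)|=\Theta(m)$, and the only bound available for $\ThresholdGabow(H,\delta)$ is $O(m\delta)$. With $\delta$ clusters this is $m^{5/4}$.

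The paper avoids this with a multiplicative threshold, $K_X=\{v\in X: d_X(v)\ge\frac56 d_G(v)\}$. Every non-core vertex then has $d_G(v)<6\,d_{V\setminus X}(v)$, so $\sum_{v\in R}d_G(v)<12|B|$ and $|E(H)|<13|B|$. Alternatively, you could keep your core, observe that $|V(H)|=O(m^{1+o(1)}/\delta^2)$, and apply a Nagamochi--Ibaraki sparse certificate before calling Gabow. That ingredient is not in your proposal.

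\textbf{Correctness.} The key lemma of Step 3 is not established.

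First, $\phi$ of order $m^{-o(1)}/\delta$ is too weak. Expansion then only limits the minority side $A$ of $U$ to roughly $\delta m^{o(1)}$ vertices, so a core vertex could have its entire neighbourhood on its own side. Your later requirement $\phi\gg 1/\delta$ contradicts Step 1.

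What is needed is $\phi=3/\delta$. The decomposition then still cuts only $O(m^{1+o(1)}/\delta)$ edges, which is all Gabow needs. Expansion must also be measured with $G$-volumes rather than $\vol_{G[U]}$, so that $\vol_G(A)\ge\delta|A|$ gives $|A|\le\lambda/3\le\delta/3$.

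Second, your case split leaves the range $\delta/c\le|S\cap U|<(1-2/c)\delta$ open. The claim that a core vertex ``exposes more than $\delta$ cut edges'' is also unjustified: it exposes about $(1-2/c)\deg(v)$ edges, which need not exceed $\lambda$.

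The missing tool is the fact that, in a nontrivial minimum cut, no vertex has more than half its edges crossing; otherwise moving it across would reduce the cut. With $|A|\le\delta/3$, a core vertex in $A$ has at least $\deg(v)-\delta/c-(|A|-1)>\deg(v)/2$ crossing edges whenever $c\ge 6$. That is a contradiction, and no case split is needed.

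So your additive core would be fine for correctness; it is the running-time analysis where it breaks.
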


As we show the same idea can be used to simplify the sublinear-time dynamic minimum cut algorithm of~\cite{goranci2024fully}. This simplification (which was not suggested by an LLM) is a simplification of the following result.

\begin{theorem}
    Given an undirected, unweighted $n$-vertex, $m$-edge graph $G=(V,E),$  there is a fully dynamic deterministic algorithm that processes an online sequence of edge insertions or deletions and maintains the edge connectivity of $G$ in $\tilde{O}(m^{1-1/31})$ amortized update time.
\end{theorem}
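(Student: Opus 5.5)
The plan is to reuse the overall architecture of the dynamic algorithm of~\cite{goranci2024fully} unchanged and swap only its static core: instead of trimming followed by shaving inside every expander cluster, a single ``harsh'' shaving step is applied once per vertex, the same step used for \Cref{thm:conn}. The algorithm is split by the current value of the edge connectivity $\lambda$.
\begin{itemize}
\item \emph{Small $\lambda$.} If $\lambda$ is at most a threshold $\tau$, it is handled by the small-connectivity routine of~\cite{goranci2024fully}, a Gabow-style computation that is cheap when $\lambda$ is small.
\item \emph{Large $\lambda$.} Then the minimum degree $\delta\ge\lambda$ is large. Trivial cuts, i.e.\ singletons, are maintained by keeping a heap of vertex degrees. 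Non-trivial minimum cuts are maintained on a contracted graph.
\end{itemize}
The contracted graph is built as follows. Compute a $\phi$-expander decomposition with $\phi=m^{-c}$. In each cluster $C$, harshly shave every vertex $v$ with $\deg_C(v)\le \deg(v)-\delta/2$ (or a similar constant fraction of $\delta$); the remaining vertices of $C$ form its core. Contract each core.

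The first key step is static and is shared with the proof of \Cref{thm:conn}: contracting the cores preserves every non-trivial minimum cut. Suppose a non-trivial minimum cut $S$ splits the core $K$ of some cluster $C$. Let $A=S\cap K$ be the smaller part.
\begin{itemize}
\item Every $v\in A$ has at most $\delta/2$ edges leaving $C$, by the harsh shaving condition.
\item Since $S$ is non-trivial and $\lambda\le\delta$, a standard counting argument shows that a constant fraction of the edges incident to $A$ stay inside $C$ and cross the cut.
\item This gives $\partial_C(A)\ge \Omega(\vol(A))$ only if $\vol(A)$ is large. In the opposite regime, the expansion $\partial_C(A)\ge\phi\vol_C(A)$ together with $\partial(S)=\lambda\le\delta$ forces $\vol(A)\le\delta/\phi$.
\item Then each vertex of $A$ has fewer than $|A|$ neighbours inside $A$, yet at least about $\delta/2$ neighbours in $C$. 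Combined with the harsh threshold, this forces $\partial(S)>\lambda$, a contradiction.
\end{itemize}

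I expect this step to be the main obstacle. Without trimming there is no guarantee that $C$ minus the shaved vertices is still an expander. The argument therefore has to use expansion of the original cluster $C$, charging edges from $A$ to the shaved vertices $C\setminus K$ against the shaving inequality. I would first try to set the shaving threshold so that every edge from $A$ to $C\setminus K$ is counted at most twice. If that fails, I would use a slightly stronger threshold such as $\deg_C(v)<\deg(v)-2\delta/5$.

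The second step bounds the size of the contracted graph. Each shaved vertex has at least $\delta/2$ edges leaving its cluster, so the number of shaved vertices is at most $O(\phi m/\delta)$. The number of edges incident to shaved vertices or leaving clusters is $O(\phi m)$ plus the degrees of the shaved vertices. This gives a contracted graph with $\tilde O(\phi m+ m/\delta)$ edges, matching the bound used in~\cite{goranci2024fully}.

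The third step makes the construction dynamic. The decomposition is rebuilt from scratch every $T$ updates. Between rebuilds, an edge update changes $\deg(v)$ and $\deg_C(v)$ for only two vertices, so the shaving status of only $O(1)$ vertices changes. Un-shaving or shaving a vertex moves $O(\deg(v))$ edges in the contracted graph. Within a phase, cluster expansion may degrade, which is handled as in~\cite{goranci2024fully} by treating updated endpoints as extra shaved vertices, at most $2T$ per phase. After each update the minimum cut of the contracted graph is recomputed with the same exact routine as in~\cite{goranci2024fully}. Finally, the parameters $\tau$, $\phi$, and $T$ are balanced exactly as in that paper. Since harsh shaving only improves, or leaves unchanged, the costs coming from trimming and shaving, the same balancing yields the $\tilde O(m^{1-1/31})$ amortized bound.
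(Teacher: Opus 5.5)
The gap is in your first key step, and the claim there is false, not merely unproved. With cores defined as the vertices having fewer than $\delta/2$ edges leaving their cluster, and no trimming, contracting cores does not preserve nontrivial minimum cuts.

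Your last bullet does not go through. A core vertex $v$ on the small side $A$ is only guaranteed more than $d_G(v)-\delta/2-(\card{A}-1)$ crossing edges. For $d_G(v)=\delta$ and $\card{A}\ge 2$ this does not force more than $d_G(v)/2$ of them, so Fact~\ref{lem:half-degree} gives no contradiction. Also, with $\phi=m^{-c}$ the expansion only gives $\card{A}\le 1/\phi$.

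Concretely, for even $\delta\ge 12$ take two cliques $Q,W$ on $\delta+1$ vertices and three further vertices:
\begin{itemize}
\item $v$, with $\delta/2-1$ neighbours in $W$, $\delta/2-1$ neighbours in $Q$, and adjacent to $u_1$ and $u_2$;
\item $u_1,u_2$, each with $\delta-3$ neighbours in $W$ and $2$ in $Q$.
\end{itemize}
Then $C=Q\cup\{v,u_1,u_2\}$ satisfies the expansion guarantee even for $\phi=3/\delta$. Your rule shaves $u_1,u_2$ but keeps $v$ in the core of $C$. Yet the unique minimum cut $W\cup\{v,u_1,u_2\}$, of value $\delta/2+3<\delta$, separates $v$ from $Q$, and after contraction the best cut has value $\delta/2+5$. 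Giving $v$ about $2\delta/5$ neighbours in $W$ and about $\delta/10$ neighbours $u_i$ defeats your $2\delta/5$ variant in the same way. Your rule is essentially the ordinary half-degree shave of \cite{Saranurak2021EdgeConnectivity}, which is only safe after trimming: here trimming would first remove $u_1,u_2$, after which $v$ gets shaved.

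The paper's fix has two ingredients you are missing.
\begin{itemize}
\item The expansion parameter is tied to the minimum degree, $\phi=3/\delta$ with volumes in $G$. It is applied to $X\cap S$, the cluster's (not the core's) intersection with the cut. This gives $\min\{\card{X\cap S},\card{X\setminus S}\}\le\lambda/3\le d_G(v)/3$ (Lemma~\ref{lem:minority}). No expansion of the core is ever needed, so the obstacle you anticipated disappears.
\item The core threshold is relative to the vertex's own degree, $d_X(v)\ge\frac56 d_G(v)$. A core vertex on the minority side would then have at least $\frac56 d_G(v)-\frac13 d_G(v)+1=\frac12 d_G(v)+1$ crossing edges, contradicting Fact~\ref{lem:half-degree}.
\end{itemize}

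The degree-relative threshold is also what makes your other two steps work, and both fail as written.

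\textbf{Size bound.} With a threshold in terms of $\delta$, a shaved vertex can have degree much larger than its number of outside edges. So the total degree of shaved vertices is not bounded by $O(\phi m)$, and your bound $\tilde O(\phi m+m/\delta)$ on the contracted graph is unsupported. The paper instead has $d_G(v)<6d_{V\setminus X}(v)$ for every non-core vertex, hence $\sum_{v\in R}d_G(v)<12\card{B}$ and $\card{E(H)}<13\card{B}=O(m^{1+o(1)}/\delta)$ (Lemma~\ref{lem:light-charge} and Theorem~\ref{thm:compression}).

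\textbf{Dynamic maintenance.} A threshold depending on the global $\delta$ can flip the status of many vertices whenever $\delta$ changes, so ``$O(1)$ status changes per update'' is not right as stated. Your phase-based rebuild, with updated endpoints declared shaved, also never argues why the expansion used in the correctness proof survives deletions. The paper does not re-engineer the dynamization at all. It keeps the fully dynamic expander decomposition of \cite{goranci2021expander} as used in \cite{goranci2024fully} and maintains $d_G(v)$ and $d_X(v)$ for every vertex. Whenever the decomposition moves a set $D$ of vertices, it re-tests $D$ and their neighbours in $O(\sum_{v\in D}d_G(v))$ time, which is dominated by the decomposition's own amortized cost. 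The rest of the algorithm and the parameter balancing are inherited unchanged.
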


\section{Empirical Study}\label{sec:empirical}

\textbf{Methodology.}

To evaluate the different LLMs, we selected 10 papers for them to simplify (described below). More specifically, we selected a paper presenting an algorithm with an involved algorithmic component or involved proof. Some of these papers have already seen their algorithms or proofs simplified by subsequent papers, while others are state of the art publications. We then uploaded the paper to the LLM, and asked it to find algorithms that are either simpler or have simpler proofs. We did not explain what it means for an algorithm to be simpler. An example of a prompt given can be found in Appendix~\ref{prompt}. 

More specifically, we ran two versions of our prompts. In one of them, we added a paragraph asking the LLM not to look online for solutions. We call this the \emph{strict} version. In the second, \emph{nonstrict} version, we omitted this request. 
Note that this did not prevent the LLM from accessing the internet, it only discouraged the LLM from looking online for solutions to the exact problem we were asking it to solve.
The goal is to understand whether looking for resources online can help the LLM find tools or information it might need, or rather make it give up more easily once it understands that the question we are asking is an open problem.

We read the output of the LLMs ourselves and 
classified it into the following different categories:

\begin{enumerate}[leftmargin=*, parsep=0pt]
    \item \textbf{Simplification.} This is the best outcome possible. The LLM finds a novel simplification as required. If the output even improves on the state of the art and is simple, it is classified here and the improvement is specifically mentioned in the results table below.
    \item \textbf{Known simplification algorithm (Known alg.).} The LLM outputs an algorithm very similar or identical to an existing paper that simplifies the input. 
    \item \textbf{Reduces to another problem (Reduction).} We gather the outputs where the LLM failed to find a simpler algorithm, but correctly reduces the problem to another one. This category does not include outputs where the LLM gives such a reduction, but claims that the second problem is solved without giving any evidence for it.
    \item \textbf{No Benefit.} The LLM's output is uninteresting, either because it gives an equally complicated or more complicated (correct) algorithm or it expands on an idea given in the input paper to make the algorithm simpler but with worse guarantees.
    \item \textbf{Gave up.} The LLM was not able to give a simplification of the algorithm, and instead gives a summary of failed attempts to do so, or summarizes the input paper instead.
    \item \textbf{Wrong.} The LLM returns either an incorrect algorithm or an algorithm that 
    based on the given proof we were not able to check correctness of the algorithm.
\end{enumerate}
To clarify the difference between 4 and 6: In both cases the LLM returns an algorithm that is not simpler than the original algorithm and the evaluator made a best effort to check the correctness of the algorithm based on the given proof. If the evaluator was able to verify the correctness of the algorithm, the output is scored as a 4, otherwise it is scored as a 6.

\medskip
\noindent
\textbf{Data Set.} 

The ten algorithmic problems we used for our study are the following. In six out of the ten problems a simplified algorithm exists in the literature (possibly with weaker running-time or randomness guarantees), indicated by an asterisk ($^*$). Note that this simpler algorithm was \textsl{not} given as input to the LLM:
    \begin{enumerate}[leftmargin=*, parsep=0pt]
        \item  \textbf{Global unweighted minimum cut$^*$.} The problem is defined in the introduction. The input paper by Kawarabayashi and Thorup~\cite{KawarabayashiThorup2018} gives a deterministic near-linear time algorithm to solve this problem. 
        There is no simplification of this result known preserving the near-linear runtime, but a result with almost-linear running time was given by Saranurak~\cite{Saranurak2021EdgeConnectivity}.
        
        \item \textbf{Weighted minimum cut.} This is a weighted version of the previous problem, that is, $G=(V,E,w)$ is an edge-weighted graph, and the goal is to find a (weighted) minimum cut of that graph. The input paper by Henzinger, Li, Rao, and Wang~\cite{DBLP:conf/soda/HenzingerLRW24} gives a deterministic near-linear time algorithm to solve this problem. 
        There is no simplification of this result known yet.

        \item \textbf{Vertex coloring$^*$.} The problem is defined in the introduction. In the input paper~\cite{DBLP:conf/soda/AssadiCK19} Assadi, Chen, and Khanna gave a simple algorithm with an involved proof that has $\card{L(v)} = O(\log n)$. A recent paper by Assadi and Yazdanyar~\cite{DBLP:journals/theoretics/AssadiY26}, not provided to the LLMs, gives a simple algorithm with a simple proof with $\card{L(v)} = O(\log^2 n)$ on average.
        
        \item \textbf{Sparse balanced cuts$^*$.} 
        For a graph $G=(V,E)$, a $\phi$-sparse cut is a subset $S\subsetneq V$ where the edges crossing the cut is less than $\phi$ times the smaller side of the cut by volume, that is $\frac{|E(S,V\setminus S)|}{\min\{\vol(S),\vol(V\setminus S)\}} < \phi$. A balanced cut is a cut where both sides of the cut have a constant fraction of the volume. In the input paper \cite{DBLP:conf/focs/ChuzhoyGLNPS20}, Chuzhoy, Gao, Li, Nanongkai, Peng, and Saranurak give an algorithm for deterministically finding $\phi$-sparse balanced cuts in $m^{1+o(1)}/\phi$ time. 
        Chen, Kyng, Probst Gutenberg, and Sachdeva~\cite{DBLP:conf/sosa/ChenKGS23} subsequently presented a substantially simpler multiplicative-weights and ball-growing framework that reduces finding a deterministic $\phi$-sparse balanced cut to approximate APSP under increasing edge lengths, obtaining an $m^{o(1)}\phi$-sparse balanced cut in $m^{1+o(1)}/\phi$ time.

        \item \textbf{Dynamic unweighted minimum cut in $n^{o(1)}$ update time.} This is a dynamic version of the minimum cut problem. Here, the unweighted graph $G=(V,E)$ is subject to edge updates -- insertions or deletions -- and the algorithm should output, after each update, the number of edges crossing the minimum cut, as well as be ready to output the underlying partition. The input paper by El-Hayek, Henzinger, and Li~\cite{DBLP:conf/soda/El-HayekH026} shows that as long as the minimum cut is smaller than $2^{\Theta(\log ^{3/4-c}n)}$ for some $c>0$, then this is solvable in $n^{o(1)}$ time. There is no simplification known.

        \item \textbf{Dynamic expander hierarchy.} A graph $G=(V,E)$ is called a \textsl{$\phi$-expander}, if no subset $S\subsetneq V$ has a $\phi$-sparse cut. 
        A $\phi$-expander decomposition of $G$ is a partition of the vertices of $G$ such that roughly a $\phi$ fraction of the edges live in different partitions and the subgraph induced by every subset is a $\phi$-expander. Contracting each $\phi$-expander into a single vertex and recursing on the resulting graph until we obtain a single node gives an expander hierarchy. 
        In the input paper~\cite{goranci2021expander}, Goranci, R{\"a}cke, Saranurak, and Tan gave a 
        dynamic algorithm that maintains a 
        $2^{-\Theta(\log^{3/4}n)}$-expander decomposition of an unweighted graph $G$ subject to edge insertions and deletion with $2^{O(\log^{3/4}n)}$ update time and $2^{O(\log^{1/2}n)}$ recourse. They use this to create a dynamic hierarchical expander decomposition of depth $O(\log^{1/4}n)$ with $2^{O(\log^{3/4}n)}=n^{o(1)}$ update time. There is no simplification of this result known yet.

        \item \textbf{Dynamic unweighted minimum cut in sublinear  update time$^*$.} This problem is a result on the dynamic minimum cut problem. In the input paper~\cite{goranci2024fully}, Goranci, Henzinger, Nanongkai, Saranurak, Thorup, and Wulff-Nilsen achieve a fully dynamic randomized algorithm for exact edge connectivity in $\tilde{O}(n)$ update time and constant query time with high probability. This result was simplified by Kenneth{-}Mordoch and Krauthgamer~\cite{DBLP:conf/sosa/Kenneth-Mordoch26}.

    \item \textbf{Planar dynamic convex hull.} The planar dynamic convex hull problem has a long history beginning in the 80s with the algorithm of Overmars and van Leeuwen~\cite{DBLP:journals/jcss/OvermarsL81} achieving $O(\log^2n)$ update time. A sequence of work~\cite{DBLP:conf/focs/Chan99, DBLP:conf/swat/BrodalJ00} culminated in the long paper of Jacob and Brodal~\cite{DBLP:conf/focs/BrodalJ02,Jacob02} that achieves $O(\log n)$ update time. The input paper is the as of yet unpublished journal version on arXiv~\cite{DBLP:journals/corr/abs-1902-11169}, and no simpler algorithm achieving $O(\log n)$ update time is known.

    \item \textbf{Triangulation of a simple polygon$^*$.} In the input paper~\cite{DBLP:journals/dcg/Chazelle91}, Chazelle gives a deterministic linear-time algorithm for finding a triangulation of a simple polygon. The computational geometry community has repeatedly called for a simpler deterministic algorithm~\cite{topp/p10}. Chan~\cite{Chan26} recently gave an optimal deterministic algorithm for triangulating a polygon with holes using Chazelle's deterministic polygon triangulation algorithm as a black box. 
    Amato, Goodrich, and Ramos~\cite{AmatoGR01} give a randomized algorithm that is significantly simpler. A simpler deterministic linear-time algorithm is as of yet unknown.

    \item \textbf{Plurality consensus in population protocols$^*$.} Population protocols are a model of distributed computation involving $n$ agents with limited memory. In each round, two agents are chosen uniformly at random, where they can interact and update their internal state. The problem is plurality consensus, where each agent is given an initial opinion among $k$ possible opinions, and the agents must collaborate to find out which opinion was initially most prevalent. In the input paper~\cite{DBLP:conf/opodis/GasieniecHMSS16}, Gasieniec, Hamilton, Martin, Spirakis, and Stachowiak gave an involved protocol to solve this problem with $k^7$ states. Recently, Breitkopf, Dallot, El-Hayek, and Schmid~\cite{DBLP:conf/podc/BreitkopfDE025,DBLP:conf/podc/BreitkopfDES26} gave a simple $k^3$-state protocol to solve this problem. 

\end{enumerate}

\noindent
\textbf{Evaluated LLMs.}
The LLMs we used are Claude, developed by Anthropic, ChatGPT by OpenAI, and Gemini by Alphabet. More specifically, we used the ``Fable 5'' model of Claude with ``Max'' effort, the ``GPT-5.6 Sol'' model of ChatGPT with ``Ultra'' effort, and either ``3.5 Thinking'' or ``3.6 Flash'' model of Gemini with ``Extended thinking''. The choice of two models for Gemini stemmed from the fact that ``3.5 Thinking'' was the best model available when starting this experiment (and was used for our problems  1, 2, 3, 5, and 7), and was replaced by ``3.6 Flash'' halfway through our runs.
Overall, the goal was to test the most advanced LLMs commercially available through a web interface.

\newcommand{\scoreone}{\cellcolor{HighlightG}Simplification}
\newcommand{\scoretwo}{\cellcolor{HighlightG}Known alg.}
\newcommand{\scorethree}{\cellcolor{HighlightY}Reduction}
\newcommand{\scorefour}{\cellcolor{HighlightY}No Benefit}
\newcommand{\scorefive}{\cellcolor{HighlightR}Gave up}
\newcommand{\scoresix}{\cellcolor{HighlightR}Wrong}

\begin{table}[ht]
\centering
\caption{Experimental Results for the ten problems, an asterisk indicates that a simpler algorithm exists in the literature}
\label{tab:performance}
\begin{tabular}{|l|p{2.1cm}|p{2.1cm}|p{2.1cm}|p{2.1cm}|p{2.1cm}|p{2.1cm}|}
\hline
 & \textbf{ChatGPT  Strict} & \textbf{Gemini Strict} & \textbf{Claude Strict} & \textbf{ChatGPT Nonstrict} & \textbf{Gemini Nonstrict} & \textbf{Claude Nonstrict} \\ \hline
1$^*$ &  \scorefour &  \scoresix & \scoretwo & \scoreone &\scoresix  & \scoretwo \\ \hline
2 & \scorefive & \scoresix & \scorethree &  \scorefour & \scoresix & \scorethree \\ \hline
3$^*$ & \scoretwo& \scoresix & {\scoreone} +Improvmt. &\scoretwo&\scoresix& \scoretwo \\ \hline
4$^*$ & \scoresix & \scoresix & \scoretwo & \scoretwo & \scoresix & \scoretwo \\ \hline
5 &\scoresix& \scoresix\footnote{with a correct reduction} &\scorefour& \scorefive & \scoresix  & \scorefive \\ \hline
6 &\scoreone & \scoresix & \scoreone & \scoreone & \scoresix & \scoreone \\ \hline
7$^*$ & \scorefour & \scoresix & \scoresix & \scoretwo & \scoresix & \scoretwo \\ \hline
8 & \scoresix & \scoresix & \scorefour & \scorethree  & \scoresix & \scorethree \\ \hline
9$^*$ & \scorethree & \scorethree & \scorethree & \scoretwo & \scorethree & \scorefive  \\ \hline
10$^*$&\scorefour&\scoresix&\scorefour&\scoretwo &\scorefour&\scorefour\\ \hline
\end{tabular}
\end{table}

\begin{table}[ht]
\centering
\caption{Performance comparison, results for strict and non-strict prompts are added up.}
\label{tab:performance-comparison}
\begin{tabular}{|p{7cm}|c|c|c|}
\hline
\textbf{Performance} & \textbf{ChatGPT} & \textbf{Gemini} & \textbf{Claude} \\ \hline
\scoreone  &3 &0 &3 \\ \hline
\scoretwo &6 &0 &6 \\ \hline
\scorethree &2 &2 &4 \\ \hline
\scorefour &4 & 1&4 \\ \hline
\scorefive &2 &0 &2 \\ \hline
\scoresix &3 &17 &1 \\ \hline
\end{tabular}
\end{table}

\begin{figure}
   \centering
\begin{subfigure}{.5\textwidth}
  \centering
  \includegraphics[width=.9\linewidth]{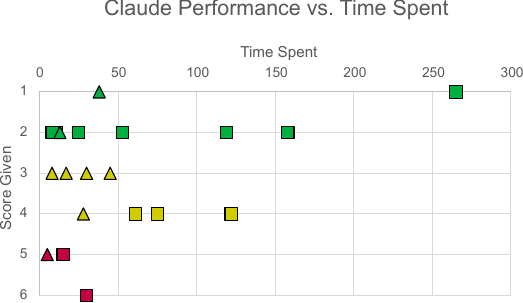}
  \caption{Claude}
  \label{fig:claudetimes}
\end{subfigure}%
\begin{subfigure}{.5\textwidth}
  \centering
  \includegraphics[width=.9\linewidth]{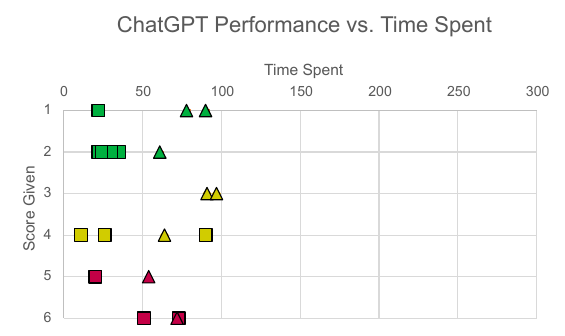}
  \caption{ChatGPT}
  \label{fig:sub2}
\end{subfigure}
\caption{The performance of Claude and ChatGPT according to the time it ran for. Squares represent the problems whose input papers have a simplification in the literature (denoted by an asterisk in \Cref{tab:performance}), whereas triangles represent the others problems, for which no simplification is known.}
\label{gif:times}
\end{figure}

\medskip
\noindent
\textbf{Evaluation results.} 

We scored all ten algorithms for both prompts according to the above six categories. Every prompt was run only once per model with memory off in a fresh chat. The detailed results are given in Table~\ref{tab:performance}. As we collected only 20 evaluations per LLM in this way, we emphasize that any observations made are useful as case studies, and not as empirical fact.
When a simpler algorithm than the input algorithm exists in the literature, we did not supply that simpler algorithm to the LLM. However, as it is possible that the simpler algorithm is contained in the training set of the LLM, an output that returns (a slight variant of) the simplified algorithm was scored as two (``Known algorithm''), even in the setting with the strict prompt where the LLM should not search online for an existing simpler algorithm. We summarize the results in Table~\ref{tab:performance-comparison}.
 
 \textbf{Gemini.} As can be seen in Table~\ref{tab:performance}, the output of Gemini was almost always wrong. 
 Gemini usually returned the results within a few minutes and there was no possibility that we could find to make Gemini spend more time thinking for a single prompt~\footnote{In preliminary tests we also experimented with Gemini 3.1 Pro thinking. The behavior and results were similar.}.
 
\textbf{ChatGPT and Claude.} ChatGPT and Claude instead usually took much longer, up to 250 minutes, and produced much higher quality answers. Both were much less likely to give incorrect answers\footnote{In our experiment, when ChatGPT was wrong, the mistake it made was subtle and not easy to find.} or gave up, and produced two novel simplifications each across three distinct problems. 
Two of them are significant and we present them in the next section. 
The third was found in some variation by both ChatGPT and Claude, but was only a minor simplification.
The plots in Figure~\ref{gif:times} suggest the following Claude: The more time spent, the better the results generally become. This, however, appears uncorrelated for ChatGPT. For ChatGPT the plots indicate that it spent more time on problems that do not have a simplification, but it did not achieve substantially better results by doing so.

\textsl{Strict vs non-strict prompts.} Usually when the strict version found a novel simplification, the non-strict also found it. However, for Claude on Problem 3 this was not the case: the strict version returned a novel solution that was not found by the non-strict version, which returned a known simpler algorithm instead.

 For six out of the ten algorithmic problems a known simple algorithm exists in the literature. With strict prompts both ChatGPT and Claude returned the known simple algorithm on very few problems. However, the non-strict prompt returned the known simple algorithm six times for ChatGPT and four times for Claude. Thus, both LLMs are very efficient in accessing additional resources if allowed. Indeed in all six cases where a simpler algorithm exists in the literature, ChatGPT either returned an even simpler algorithm or the known simple algorithm. Claude returned it in four out of the six cases.

\medskip
\noindent
\textbf{Conclusions.}
 We want to emphasize that this evaluation gives a snapshot in time of the performance of LLMs and that a new version of an LLM could lead to substantially different, most likely improved, performance. However, the evaluation shows that present day LLMs are already able to simplify some algorithms: For three out of ten problems, a simplification was found. In addition our results show that at present, LLMs sometimes return wrong answers and, thus, all output needs to be carefully checked by experts in the field. 

\section{Vertex Coloring}\label{sec:coloring}
In this section, we present an algorithm suggested by Claude. This algorithm is reminiscent of the algorithm given by Assadi and Yazdanyar~\cite{DBLP:journals/theoretics/AssadiY26}, although different enough that we believe it deserves attention in its own right. It also gives a more careful analysis to refine the average list size from $O(\log^2n)$ to $O(\log n \log \Delta)$.

\begin{problem}
    Given a graph $G=(V,E)$ on $n$ vertices with maximum degree $\Delta$, and a list size $s(n)$, can one efficiently give for each vertex $v$ a list $L(v)$ of size $s(n)$, such that a coloring of the vertices exists, with no two neighboring vertices sharing a color, and such that the color of the vertex $v$ is chosen in $L(v)$? Ideally, the procedure should not query the graph $G$.
\end{problem}

Recently, Assadi and Yazdanyar~\cite{DBLP:journals/theoretics/AssadiY26}, gave an algorithm that produces lists of average size $O(\log^2 n)$, but where the proofs are very simple, showing that even a greedy strategy  is able to color the graph afterwards. We present below a simple algorithm with a simple proof, both suggested by Claude, that  improves this result slightly, by keeping the same guarantees, but improving the average list size to $O(\log n \log \Delta)$. Both algorithms also assign a suitably chosen random number $X_v$ to each vertex $v$. 
This allows a simple greedy algorithm applied to the vertex set in nondecreasing order of $X_v$ to, with high probability, determine a vertex coloring such that the color of each vertex $v$ belongs to $L(v)$. More specifically, the greedy algorithm works as follows: As long as no failure is encountered, choose a vertex $v \in \argmin\{X_u: u\text{ is uncolored}\}$, and look for a color in $L(v)$ that is unused by any of the neighbors of $v$. If such a color exists, color the vertex with it. Otherwise, fail.

\begin{algorithm}\caption{List-Sampling}\label{def:sampling}
Let $\alpha = 48$ and $\beta=480$. Set $\tau = \max\left\{0, \floor{\log\frac{\Delta+1} {\beta \ln n}}+1\right\}$. For every vertex $v$,

\begin{enumerate}
\item Define so-called \textsl{levels} $X_v$, independently from other vertices, as a random variable over $\N$ with distribution:
$$
\Proba(X_v = \ell) = 
\begin{cases}
2^{-(\ell+1)} & \text{if } 0\le \ell\le \tau-1\\
2^{-\tau} & \text{if } \ell=\tau
\end{cases}
$$

\item Sample the list $L(v)$ consisting of $s_{X_v}$ many colors chosen uniformly at random without replacement from $[\Delta+1]$, with:
$$
s_{X_v} = \begin{cases}
\min \left \{\ceil{\alpha 2 ^{X_v} \ln n}, \Delta +1\right\} & \text {if } 0 \le X_v \le \tau -1\\
\Delta +1 & \text {if } X_v =\tau
\end{cases}
$$
\end{enumerate}
\end{algorithm}

\begin{theorem}\label{thm:vertexcolor}

Let $G=(V,E)$ be any $n$-vertex graph with maximum degree $\Delta$, with sampled levels $X_v$ and lists $L(v)$ as defined in \Cref{def:sampling}.  Then:
\begin{enumerate}
\item $\forall v \in V, \Expect(\card{L(v)}) = O(\log n\log \Delta)$
\item With probability at least $1-2n^{-9}$, the greedy algorithm
correctly colors every vertex $v\in V$ with a color $C(v) \in L(v)$.
\end{enumerate}
\end{theorem}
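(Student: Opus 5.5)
Part 1 is a direct computation from the distribution of $X_v$. For $0\le \ell\le\tau-1$ the contribution is $\Proba(X_v=\ell)\, s_\ell \le 2^{-(\ell+1)}(\alpha 2^\ell\ln n+1) = O(\log n)$. Summing over the $\tau = O(\log \Delta)$ such levels gives $O(\log n\log\Delta)$. The top level adds $2^{-\tau}(\Delta+1)$. By the choice of $\tau$ we have $2^{\tau} > \frac{\Delta+1}{\beta\ln n}$, so this term is at most $\beta\ln n$. If $\tau=0$ then $\Delta+1 < \beta\ln n$ and the bound is trivial. Hence $\Expect\card{L(v)} = O(\log n\log\Delta)$.

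For part 2, I would fix a vertex $v$ and bound the probability that the greedy fails at $v$ by $2n^{-10}$, then take a union bound over all $n$ vertices. The key quantity is the set $N^{\le}(v)$ of neighbors $u$ with $X_u\le X_v$, since these are the only neighbors that can be colored before $v$. Ties can be handled pessimistically by counting all of them. If $X_v=\tau$, then $L(v)=[\Delta+1]$ and $v$ always has a free color, since it has at most $\Delta$ neighbors. Otherwise let $X_v=\ell<\tau$.

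\textbf{Step A (few earlier neighbors).} Conditioned on $X_v=\ell$, each neighbor independently satisfies $X_u\le\ell$ with probability $1-2^{-(\ell+1)}$. So the number of neighbors at levels strictly greater than $\ell$, which are the ones colored later, has expectation $\deg(v)2^{-(\ell+1)}$. This means $|N^{\le}(v)|$ can be close to $\deg(v)$, and counting earlier neighbors is the wrong measure. What matters instead is the number of colors left available: $v$ can only fail if the earlier neighbors block all of $L(v)$. Since $L(v)$ is chosen independently of everything else, I condition on all other randomness, including the colors used by earlier neighbors. Let $F$ be the set of colors not used by $N^{\le}(v)$. Then $|F|\ge \Delta+1-|N^{\le}(v)| \ge 1+\#\{u\in N(v): X_u>\ell\}$, using the worst case $\deg(v)=\Delta$.

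\textbf{Step B (enough free colors).} By Chernoff, the number of neighbors at level $>\ell$ is at least $\tfrac12\Delta 2^{-(\ell+1)}$ with probability $1-n^{-10}$. This requires $\Delta 2^{-\ell}\gtrsim \beta\ln n$, which holds for $\ell\le\tau-1$ by the definition of $\tau$ with $\beta=480$. Note the distribution is truncated at $\tau$, so for $\ell\le\tau-1$ we get $\Proba(X_u>\ell)=2^{-(\ell+1)}$ exactly. A subtlety: a vertex of degree $d<\Delta$ gains $\Delta-d$ extra free colors, so I would phrase the bound as $|F|\ge \Delta+1-d+\#\{\text{later neighbors}\}$, which is at least $\min(\cdot)\ge \Omega((\Delta+1)2^{-\ell})$ in both regimes.

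\textbf{Step C (the list hits $F$).} Given $|F|\ge c(\Delta+1)2^{-\ell}$, a uniform $s_\ell$-subset misses $F$ with probability at most $(1-c2^{-\ell})^{\alpha 2^\ell\ln n}\le n^{-c\alpha}\le n^{-10}$ for $\alpha=48$. If $s_\ell=\Delta+1$, the list is the full palette and it trivially hits $F$.

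The main obstacle is the dependence issue in Step C: the set of colors used by earlier neighbors depends on the greedy's run, which in turn depends on the other lists. I would resolve this by observing that $L(v)$ influences nothing before $v$ is processed, so conditioning on the entire run up to $v$ leaves $L(v)$ uniform. The constant bookkeeping, combining the two failure probabilities of $n^{-10}$ each into the stated $2n^{-9}$ after the union bound, is routine.
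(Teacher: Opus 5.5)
Your proposal is correct and follows the paper's proof essentially step for step. The shared ingredients are: the bound $|A(v)|\ge \Delta+1-\deg(v)+h(v)$ with ties counted as earlier; the two-regime good event, where either $\Delta+1-\deg(v)\ge (\Delta+1)2^{-\ell-2}$, or else $\deg(v)>\tfrac34(\Delta+1)$ and Chernoff gives $h(v)\ge(\Delta+1)2^{-\ell-2}$; the independence of $L(v)$ from the colors used before $v$; the miss probability $(1-2^{-\ell-2})^{\alpha 2^\ell\ln n}$; and a final union bound. Fixing your regime threshold at $c=1/4$, as the paper does in $\mathcal{E}_v$ (Chernoff with $\delta=1/3$), makes $\alpha=48$ and $\beta=480$ give a per-vertex failure probability of at most $n^{-10}+n^{-12}\le 2n^{-10}$, so your ``routine'' bookkeeping does close.
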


\subsection{Proof of \Cref{thm:vertexcolor}}

Let $N(v)$ denote the neighbourhood of vertex $v$, and let $\pi$ be the permutation of $V$ representing the order in which the greedy algorithm considers the vertices $v\in V$. Define $A(v)$ to be the set of colors that are still available to vertex $v$ when it is processed by the greedy algorithm, disregarding the list $L(v)$, that is:
$$
A(v) := [\Delta + 1] \setminus \{C(u): u \in N(v), C(u) \neq \bot\}
$$
and let $F_v$ be the event that the greedy algorithm fails when processing node $v$.

We first make elementary observations and remind the reader of Chernoff's bounds, which will be useful in our analysis:

\begin{observation}\label{obs:1}
We have the following inequalities, when $\tau \ge 1$:
\begin{enumerate}
\item $(\Delta +1) 2^{-\tau} < \beta \ln n$
\item $(\Delta +1) 2^{-\ell} \ge \beta \ln n$ for all $0\le \ell\le \tau -1$
\end{enumerate}
\end{observation}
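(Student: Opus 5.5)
The plan is to unfold the definition of $\tau$ in \Cref{def:sampling} and use only the two defining inequalities of the floor function; no probability is involved. Throughout I take $\log$ to be base $2$, which is consistent with the powers of $2$ in the level distribution. Set
$$x := \frac{\Delta+1}{\beta \ln n} > 0,$$
so that $\tau = \max\{0, \floor{\log x} + 1\}$.

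\textbf{Step 1: remove the maximum.} If $\tau \ge 1$, the maximum cannot be attained only by the term $0$. Hence $\tau = \floor{\log x} + 1$, or equivalently $\tau - 1 = \floor{\log x}$. (This also shows $\floor{\log x} \ge 0$, i.e.\ $x \ge 1$, although we will not need that.)

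\textbf{Step 2: use the floor bounds.} For any real $y$ we have $\floor{y} \le y < \floor{y} + 1$. With $y = \log x$ this gives
$$\tau - 1 \le \log x < \tau,$$
and exponentiating gives $2^{\tau - 1} \le x < 2^{\tau}$.

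\textbf{Step 3: read off the two items.} From $x < 2^{\tau}$, multiplying through by $\beta \ln n \cdot 2^{-\tau} > 0$ gives $(\Delta+1)2^{-\tau} < \beta \ln n$, which is item 1. Here I assume $n \ge 2$, so that $\ln n > 0$ and $x$ is well-defined and positive.

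From $2^{\tau-1} \le x$ we get $(\Delta+1)2^{-(\tau-1)} \ge \beta \ln n$. For any $0 \le \ell \le \tau - 1$ we have $2^{-\ell} \ge 2^{-(\tau-1)}$, so $(\Delta+1)2^{-\ell} \ge (\Delta+1)2^{-(\tau-1)} \ge \beta \ln n$, which is item 2.

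The only point that needs care is Step 1, namely noting that the hypothesis $\tau \ge 1$ is exactly what lets us drop the $\max\{0,\cdot\}$. The remaining steps are routine manipulations of the floor function.
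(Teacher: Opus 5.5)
Your proof is correct and follows essentially the same route as the paper: with $x=(\Delta+1)/(\beta\ln n)$, both use $\lfloor \log x\rfloor+1>\log x$ to get $2^{\tau}>x$, and both use $\tau\neq 0$ to drop the maximum so that $\ell\le\tau-1=\lfloor\log x\rfloor\le\log x$, giving $2^{\ell}\le x$. The only cosmetic difference is that the paper derives item 1 directly from $\tau\ge\lfloor\log x\rfloor+1$, which holds even when $\tau=0$, whereas you invoke $\tau\ge 1$ for both items.
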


\begin{proof}
For the first identity, note that by definition, $\tau \ge \floor{\log\frac{\Delta+1} {\beta \ln n}+1}$ and thus $2^\tau \ge 2^{\floor{\log \frac{\Delta+1} {\beta \ln n}+1}}>2^{\log \frac{\Delta+1} {\beta \ln n}} >\frac{\Delta+1} {\beta \ln n}$.

For the second identity, note that if $\ell \le \tau-1$, then as $\tau \neq 0$, we have that $\ell \le \floor{\log \frac{\Delta+1} {\beta \ln n}+1}-1$ and thus $2^\ell \le 2^{ \floor{\log \frac{\Delta+1} {\beta \ln n}+1}-1}\le 2^{ \log \frac{\Delta+1} {\beta \ln n}} \le \frac{\Delta+1} {\beta \ln n}$.
\end{proof}

\begin{observation}\label{obs:2}
We have, for every $v \in V$ and $j \in [\tau]$, that $\Proba(X_v \ge j) = 2^{-j}$.
\end{observation}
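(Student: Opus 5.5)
The claim is \Cref{obs:2}: for every $v\in V$ and $j\in[\tau]$ (that is, $1\le j\le \tau$), $\Proba(X_v\ge j)=2^{-j}$. The plan is to sum the point masses in the definition of $X_v$ from \Cref{def:sampling} over all $\ell\ge j$. The only subtlety is that the atom at $\tau$ has a different form from the others.

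First, the distribution is supported on $\{0,\dots,\tau\}$. So for $1\le j\le\tau$ we can write
\[
\Proba(X_v\ge j)=\sum_{\ell=j}^{\tau-1}\Proba(X_v=\ell)+\Proba(X_v=\tau)=\sum_{\ell=j}^{\tau-1}2^{-(\ell+1)}+2^{-\tau}.
\]

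Next, evaluate the finite geometric sum:
\[
\sum_{\ell=j}^{\tau-1}2^{-(\ell+1)}=\sum_{k=j+1}^{\tau}2^{-k}=2^{-j}-2^{-\tau}.
\]
This holds for all $j\le\tau$. When $j=\tau$ the sum is empty and both sides are $0$, so that case needs no separate treatment. Adding the final atom $2^{-\tau}$ gives exactly $2^{-j}$.

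As a sanity check, the same computation at $j=0$ gives total mass $1$, which confirms that \Cref{def:sampling} defines a valid probability distribution. Independence across vertices plays no role here. There is no real obstacle. The only points needing care are the indexing convention $[\tau]=\{1,\dots,\tau\}$ and treating the last atom $2^{-\tau}$ separately from the geometric terms.
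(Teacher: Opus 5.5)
Your proposal is correct and follows the same route as the paper: it writes $\Proba(X_v\ge j)=\sum_{\ell=j}^{\tau-1}2^{-(\ell+1)}+2^{-\tau}$ and evaluates the geometric sum to $2^{-j}$. Your explicit check that the empty sum at $j=\tau$ is consistent, and that the total mass at $j=0$ is $1$, are harmless additions the paper leaves implicit.
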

\begin{proof} 
For every $v \in V$ and $j \in [\tau]$, we have:
\begin{align*}
\Proba(X_v \ge j)& = \sum_{i=j}^{\tau-1} 2^{-i-1} +2^{-\tau} = 2^{-j}
\qedhere
\end{align*}
\end{proof}

\begin{theorem}[Chernoff, see 1.10.5 and 1.10.12 of~\cite{DBLP:series/ncs/Doerr20}]
Let $Y = \sum_i Y_i$ be a sum of independent $\{0,1\}$-valued random variables with $\mu = \Expect(Y)$. Then, for $\delta \in (0,1)$, $\Proba(Y\le (1-\delta)\mu)\le \exp\left(-\frac{\delta^2 \mu}{2}\right)$ and  $\Proba(Y\ge (1+\delta)\mu)\le \exp\left(-\frac{\delta^2 \mu}{3}\right)$.
\end{theorem}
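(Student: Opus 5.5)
We prove both tail bounds with the standard exponential-moment (Bernstein–Chernoff) method, and then reduce each to an elementary one-variable inequality on $(0,1)$.

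\textbf{Common step.} Write $p_i = \Proba(Y_i = 1)$, so that $\mu = \sum_i p_i$. For any real $t$, independence of the $Y_i$ gives
$$\Expect\left(e^{tY}\right) = \prod_i \left(1 + p_i(e^t-1)\right) \le \prod_i \exp\left(p_i(e^t-1)\right) = \exp\left(\mu(e^t-1)\right),$$
using $1+x\le e^x$. This bound is valid for both signs of $t$.

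\textbf{Upper tail.} For $t>0$, Markov's inequality applied to $e^{tY}$ gives
$$\Proba(Y\ge (1+\delta)\mu) \le \exp\left(\mu(e^t-1) - t(1+\delta)\mu\right).$$
We choose $t=\ln(1+\delta)$, which is positive for $\delta\in(0,1)$. This yields the bound $\exp\left(\mu\left(\delta-(1+\delta)\ln(1+\delta)\right)\right)$. It remains to show that $f(\delta) := \delta - (1+\delta)\ln(1+\delta) + \delta^2/3 \le 0$ on $(0,1)$.

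\textbf{Lower tail.} For $t>0$, Markov's inequality applied to $e^{-tY}$, together with the common bound at $-t$, gives
$$\Proba(Y\le(1-\delta)\mu) \le \exp\left(\mu(e^{-t}-1) + t(1-\delta)\mu\right).$$
We choose $t=-\ln(1-\delta)$, which is positive for $\delta\in(0,1)$. Then $e^{-t}-1=-\delta$, and the exponent becomes $\mu\left(-\delta-(1-\delta)\ln(1-\delta)\right)$. It remains to show that $g(\delta) := (1-\delta)\ln(1-\delta) + \delta - \delta^2/2 \ge 0$ on $(0,1)$.

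\textbf{Main obstacle: the two calculus inequalities.} The only real work is verifying these inequalities, and we do it by differentiating.

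For $g$: we have $g(0)=0$ and
$$g'(\delta) = -\ln(1-\delta) - 1 + 1 - \delta = -\ln(1-\delta) - \delta \ge 0,$$
since $-\ln(1-\delta)\ge \delta$. Hence $g$ is nondecreasing from $g(0)=0$, so $g\ge 0$.

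For $f$: we have $f(0)=0$ and $f'(\delta) = -\ln(1+\delta) + 2\delta/3$, with $f'(0)=0$. Next, $f''(\delta) = -\frac{1}{1+\delta} + \frac{2}{3}$, which is negative on $[0,1/2)$ and positive on $(1/2,1]$. So $f'$ first decreases from $f'(0)=0$ and then increases. Its maximum on $[0,1]$ is therefore attained at an endpoint: either $f'(0)=0$ or $f'(1) = 2/3 - \ln 2 < 0$. Hence $f'\le 0$ on $[0,1]$, and since $f(0)=0$ we get $f\le 0$.

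Substituting these two inequalities into the exponents gives exactly $\exp(-\delta^2\mu/3)$ for the upper tail and $\exp(-\delta^2\mu/2)$ for the lower tail, as claimed.
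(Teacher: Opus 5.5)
Your proof is correct. The bound $\Expect(e^{tY})\le \exp(\mu(e^t-1))$ holds for every real $t$, and the choices $t=\ln(1+\delta)$ and $t=-\ln(1-\delta)$ work; your calculus checks that $f\le 0$ and $g\ge 0$ on $(0,1)$ (including $f'(1)=2/3-\ln 2<0$) also go through. The paper gives no proof of its own and only cites Doerr's textbook, where these bounds come from this same standard exponential-moment argument, so your route is the expected one.
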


Let us now dive into the core of the proof. We first give an estimate of $\card{A(v)}$. Recall that this bounds number of available colors at $v$ when $v$ is processed, based on the level of $v$.

\begin{lemma}\label{lem:numcolorestimate}
For every vertex $v$, let $h(v)$ be the random variable that counts the number of neighbors of $v$ that are on a level strictly higher than $v$: $h(v) = \card{\{u \in N(v): X_u > X_v\}}$. Then, we have that $\card{A(v)} \ge \Delta + 1 - \text{deg}(v) +h(v) \ge 1 + h(v)$.
\end{lemma}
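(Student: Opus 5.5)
The plan is a direct counting argument about which neighbors of $v$ can already carry a color at the moment $v$ is processed. Since the greedy algorithm picks vertices in nondecreasing order of their level, any neighbor $u$ with $X_u > X_v$ is processed strictly after $v$. So when $v$ is considered, $C(u) = \bot$ for all such $u$.

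The first step is to make this ordering claim precise. The greedy rule always chooses $v \in \argmin\{X_u : u \text{ uncolored}\}$. So while $v$ is uncolored, no vertex of level strictly larger than $X_v$ can be selected. Ties at equal levels may be broken arbitrarily, and this does not matter here: we only exclude neighbors of strictly higher level.

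The second step is to bound the set of colors removed when forming $A(v)$. By definition, $A(v) = [\Delta+1]\setminus\{C(u) : u\in N(v), C(u)\neq\bot\}$. The removed set has size at most the number of colored neighbors of $v$ at that moment. This is at most $\card{N(v)} - h(v) = \deg(v) - h(v)$, because the $h(v)$ neighbors counted by $h(v)$ are still uncolored. Distinct neighbors may even share colors, which only helps. Hence $\card{A(v)} \ge \Delta+1-\deg(v)+h(v)$.

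The final inequality then follows from $\deg(v)\le\Delta$. I also want to note that the statement is deterministic and holds on every outcome of the randomness. One small subtlety needs care: if the algorithm has already failed before reaching $v$, then $A(v)$ is evaluated with the partial coloring at that time. The same argument applies verbatim, since uncolored vertices only make $A(v)$ larger.

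No step is a real obstacle. The only point requiring care is stating the ordering fact cleanly, namely that strictly higher levels are processed later regardless of tie-breaking.
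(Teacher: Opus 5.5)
Your proof is correct and is the same counting argument as the paper's: neighbors on strictly higher levels are still uncolored when $v$ is processed, so at most $\deg(v)-h(v)$ colors are removed from $[\Delta+1]$, and $\deg(v)\le\Delta$ gives the second inequality. Your explicit handling of tie-breaking and of an earlier failure makes precise what the paper leaves implicit.
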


\begin{proof}
This is immediate as each neighbor of $v$ processed before $v$ takes up at most one color. Since there are at most $\text{deg}(v) - h(v)$ of neighbors processed before $v$, the result follows.
\end{proof}

We now define a \emph{good event for vertex $v$}, the event in which we can ensure that many colors are available to it, in relationship to its level. More specifically, for vertex $v$ on level $X_v = \ell \le \tau-1$:
$$
\mathcal E_v :=\{(\Delta +1)2^{-\ell-2}\le \Delta +1 -\text{deg}(v)\}\union\{h(v) \ge (\Delta +1)2^{-\ell -2}\}
$$

This event is the union of two events, the first is likely when the degree of $v$ is small, the second is likely if the vertex is processed early. In those cases, one can ensure that there are many colors in $A(v)$. By \Cref{lem:numcolorestimate}, 
if the event $\mathcal E_v$ occurs, then $\card{A(v)}\ge (\Delta +1)2^{-\ell-2}$.

\begin{lemma}
For every vertex $v$ and every $0\le \ell\le \tau-1$, we have that:
$$
\Proba(\lnot \mathcal E_v\big| X_v = \ell) \le n^{-\beta / 48} = n^{-10}
$$
\end{lemma}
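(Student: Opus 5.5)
We condition on $X_v=\ell$ with $0\le \ell\le \tau-1$ (so $\tau\ge 1$ and \Cref{obs:1} applies) and split according to the degree of $v$. If $(\Delta+1)2^{-\ell-2}\le \Delta+1-\deg(v)$, then the first event in the union defining $\mathcal E_v$ holds deterministically and there is nothing to prove. So we may assume $\deg(v) > (\Delta+1)(1-2^{-\ell-2}) \ge \tfrac34(\Delta+1)$. It then suffices to bound $\Proba(h(v) < (\Delta+1)2^{-\ell-2}\mid X_v=\ell)$.

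Conditioned on $X_v=\ell$, the indicators $Y_u=\one[X_u>\ell]$ for $u\in N(v)$ are independent, because the levels are sampled independently, and they are $\{0,1\}$-valued. Their sum is $h(v)$. By \Cref{obs:2}, $\Proba(X_u\ge \ell+1)=2^{-\ell-1}$, which is valid since $\ell+1\in[\tau]$. Hence
\[
\mu:=\Expect(h(v)\mid X_v=\ell)=\deg(v)\,2^{-\ell-1} > \tfrac34(\Delta+1)2^{-\ell-1}=\tfrac32(\Delta+1)2^{-\ell-2}.
\]
The threshold $(\Delta+1)2^{-\ell-2}$ is therefore less than $\tfrac23\mu$. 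We apply the lower-tail Chernoff bound with $\delta=1/3$:
\[
\Proba\bigl(h(v)\le \tfrac23\mu\bigr)\le \exp(-\mu/18).
\]

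It remains to lower bound $\mu$ via \Cref{obs:1}(2), which gives $(\Delta+1)2^{-\ell}\ge \beta\ln n$. This yields $\mu > \tfrac34\cdot\tfrac12\,\beta\ln n=\tfrac38\beta\ln n$, so the failure probability is at most $\exp(-\tfrac{3}{8\cdot18}\beta\ln n)=n^{-\beta/48}=n^{-10}$ for $\beta=480$.

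The main point of care is the constant bookkeeping: the slack between $\deg(v)>\tfrac34(\Delta+1)$ and the threshold must give exactly the factor $1/48$. With $\delta=1/3$ it gives $\tfrac{1}{18}\cdot\tfrac38=\tfrac1{48}$, so the bound matches the stated one. A secondary point is that the case split must use the strict negation of the degree event, which is what yields $\deg(v)>\tfrac34(\Delta+1)$. Conditioning on $X_v$ does not affect the independence of the neighbors' levels.
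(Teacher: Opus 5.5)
Your proposal is correct and follows the paper's proof essentially step for step. It uses the same case split on the deterministic degree event and the same decomposition of $h(v)$ into independent Bernoulli$(2^{-\ell-1})$ indicators via \Cref{obs:2}. It then applies the lower-tail Chernoff bound with $\delta=1/3$ and \Cref{obs:1}(2) to get $\tfrac{1}{18}\cdot\tfrac38=\tfrac1{48}$.
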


\begin{proof}
The conditioning on $X_v=\ell$ reveals nothing on $(X_u)_{u \neq v}$, which are sampled independently from $X_v$ (and each other). If the event $\{(\Delta +1)2^{-\ell-2}\le \Delta +1 -\text{deg}(v)\}$ fails, then $\text{deg}(v)> \Delta +1 - (\Delta +1)2^{-\ell-2}\ge \frac 3 4 (\Delta +1)$ as $\ell \ge 0$. We now view $h(v)=\sum_{u \in N(v)} \one \{X_u \ge \ell+1\} $ as a sum of $\text{deg}(v)$ independent Bernoulli random variables of parameter $2^{-(\ell+1)}$ by \Cref{obs:2}. 
The mean of $h(v)$ is 
\[\Expect\left(h(v)\right) = \text{deg}(v) \cdot 2^{-(\ell+1)}> \frac 3 4 (\Delta +1)\cdot 2^{-(l+1)} = \frac 3 2 (\Delta +1)\cdot 2^{-(\ell+2)}. \]
Using Chernoff's bounds with $\delta = \frac 1 3$, we get that:
\begin{multline*}
\Proba\left(h(v) < (\Delta +1)\cdot 2^{-(\ell+2)}\right) \le \Proba\left(h(v) \le (1-1/3)\cdot\Expect(h(v))\right) \le \exp\left(-\frac{\Expect(h(v))}{18}\right) \\
\le \exp\left(-\frac {(\Delta +1)\cdot 2^{-\ell}} {48}\right) \le \exp\left(-\frac {\beta \ln n} {48}\right) = n^{-\beta/48}
\end{multline*}
We used that $\Expect(h(v))> \frac 3 4 (\Delta +1)\cdot 2^{-(l+1)}$ for the third inequality, and \Cref{obs:1} for the last inequality with $\ell \le \tau-1$. 
\end{proof}

We now upper bound the probability of failing on a given vertex $v$.
\begin{lemma}
For every vertex $v \in V$, we have that $\Proba(F_v)\le n^{-\alpha/4} + n^{-\beta/48} \le 2n^{-10}$.
\end{lemma}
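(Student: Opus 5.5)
We condition on the level $X_v=\ell$ and split into two cases: $\ell=\tau$, and $0\le \ell\le \tau-1$.

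\textbf{Case $\ell=\tau$.} Here $s_\tau=\Delta+1$, so $L(v)=[\Delta+1]$. By \Cref{lem:numcolorestimate}, $\card{A(v)}\ge 1$, so some color in $L(v)$ is unused by the neighbors of $v$. Hence $\Proba(F_v\mid X_v=\tau)=0$. The same argument applies whenever $s_\ell=\Delta+1$ because of the minimum in the definition of $s_\ell$.

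\textbf{Case $0\le \ell\le\tau-1$.} Split according to the good event:
\[
\Proba(F_v\mid X_v=\ell)\le \Proba(\lnot\mathcal E_v\mid X_v=\ell)+\Proba(F_v\cap\mathcal E_v\mid X_v=\ell).
\]
The preceding lemma bounds the first term by $n^{-\beta/48}$.

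For the second term, the key observation is that $L(v)$ is sampled independently of all levels and of the lists of the other vertices. Conditioned on $X_v=\ell$, the processing order, the colors assigned before $v$, and hence $A(v)$ and the event $\mathcal E_v$, depend only on $(X_u)_u$ and $(L(u))_{u\neq v}$. So $L(v)$ is a uniformly random $s_\ell$-subset of $[\Delta+1]$ that is independent of $A(v)$.

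On $\mathcal E_v$ we have $\card{A(v)}\ge (\Delta+1)2^{-\ell-2}$. The failure event $F_v$ means $L(v)\cap A(v)=\emptyset$. Sampling without replacement, this has probability at most
\[
\Bigl(1-\tfrac{\card{A(v)}}{\Delta+1}\Bigr)^{s_\ell}\le \exp\bigl(-2^{-\ell-2}\cdot \alpha 2^\ell\ln n\bigr)=n^{-\alpha/4}.
\]
This uses $s_\ell\ge \alpha 2^\ell\ln n$ whenever $s_\ell<\Delta+1$; if $s_\ell=\Delta+1$, failure is impossible as in the first case.

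\textbf{Conclusion.} Combining the cases gives $\Proba(F_v\mid X_v=\ell)\le n^{-\alpha/4}+n^{-\beta/48}$ for every $\ell$. Averaging over $\ell$ yields $\Proba(F_v)\le n^{-\alpha/4}+n^{-\beta/48}$. Since $\alpha=48$ and $\beta=480$, this is $n^{-12}+n^{-10}\le 2n^{-10}$.

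\textbf{Main obstacle.} The delicate step is the independence claim. We must check that the greedy run up to $v$, and hence $A(v)$, is determined without looking at $L(v)$. This holds because the order is fixed by the levels, with ties broken in some fixed way, and the greedy algorithm reads $L(v)$ only when it processes $v$. If some earlier vertex already failed, then $v$ is never processed and $F_v$ does not occur. Alternatively, one can define $A(v)$ on the run that ignores earlier failures; either way the bound holds.
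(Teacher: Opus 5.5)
Your proof is correct and follows essentially the same route as the paper: the paper splits into the level-$\tau$ event (probability zero since $L(v)=[\Delta+1]$), the bad event $\lnot\mathcal E_v$ (at most $n^{-\beta/48}$ by the preceding lemma), and the good event. On the good event, independence of $L(v)$ from $A(v)$ together with the without-replacement bound gives $n^{-\alpha/4}$. You only differ by conditioning on $X_v=\ell$ first instead of decomposing the unconditional probability, and your explicit treatment of tie-breaking and of earlier failures, which makes $A(v)$ well defined and independent of $L(v)$, is a correct point that the paper leaves implicit.
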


\begin{proof}
We have the following upper bound on $\Proba(F_v)$:
$$
\Proba(F_v) \le \Proba(F_v \inter \{X_v = \tau\}) + \Proba(F_v \inter \{X_v < \tau\} \inter \mathcal E_v) + \Proba(F_v \inter \{X_v < \tau\} \inter \lnot \mathcal E_v )
$$

The first term is 0 as when $X_v = \tau$, the list $L(v)$ contains all the colors.
The last term is upper bounded by $\Proba(\{X_v < \tau\} \inter \lnot \mathcal E_v ) = \sum_{\ell=0}^{\tau-1} \Proba(X_v=\ell)\Proba(\lnot \mathcal E_v \big| X_v = \ell) \le n^{-\beta/48}$ by the previous lemma. It remains to estimate the middle term.
We do this as follows: 
\begin{equation}\label{eq:threeparts}
\begin{aligned}
\Proba(F_v \inter \{X_v < \tau\} \inter \mathcal E_v) &= \sum_{\ell=0}^{\tau-1} \Proba(X_v =\ell)\Proba(F_v \inter \mathcal E_v\big| X_v = \ell)\\
&= \sum_{\ell=0}^{\tau-1} \Proba(X_v =\ell) \Proba(\mathcal E_v\big| X_v = \ell)\Proba(F_v\big| \mathcal E_v \inter \{X_v = \ell\})
\end{aligned}
\end{equation}

Let's estimate $\Proba(F_v\big| \mathcal E_v \inter \{X_v = \ell\})$.
If $s_\ell = \Delta +1$, then this is simply 0. Otherwise, we know that $s_\ell \ge \alpha 2^\ell \ln n$. Moreover if $\mathcal E_v$ holds, then $\card{A(v)} \ge (\Delta +1)2^{-\ell-2}$ by \Cref{lem:numcolorestimate}. $F_v$ happens if all colors sampled in $L(v)$ do not intersect $A(v)$. $L(v)$ and $A(v)$ being independent random variables(As $A(v)$ depends on $(L(u))_{u \neq v}$, we have that:
\begin{multline*}
\Proba(F_v\big| \mathcal E_v \inter \{X_v = \ell\})= 
\frac{\binom{\Delta+1-|A|}{s_\ell}}{\binom{\Delta+1}{s_\ell}}
\le \left( 1-\frac {\card{A(v)}}{\Delta +1}\right)^{s_{\ell}}\\
\le (1-2^{-\ell-2})^{\alpha2^\ell \ln n}\le \exp\left(-\alpha 2^\ell \ln n \cdot 2^{-\ell-2}\right)\le n^{-\alpha/4}
\end{multline*}
Using this upper bound in \Cref{eq:threeparts} 
gives the desired result.
\end{proof} 

\begin{proof}[Proof of \Cref{thm:vertexcolor}]
	The probability of a single vertex failing is at most $2n^{-10}$ by the previous lemma, and the second claim holds by taking a union bound over all vertices. It remains to estimate the expected list size of a vertex $v$.
	
	If $\tau \ge 1$, then we have that:
	\begin{align*}
	\Expect(\card{L(v)}) &\le 2^{-\tau}(\Delta+1)+ \sum_{\ell=0}^{\tau-1}  2^{-(\ell+1)}(\alpha 2^\ell \ln n+1)\\
	&\le \beta \ln n + \frac {\alpha}{2} \tau \ln n +1 = O(\ln n \ln \Delta)
	\end{align*}
	
If $\tau=0$ then $\Delta +1 < \beta \ln n$ and thus every vertex can have a list of all the colors.
\end{proof}

\section{Deterministic Almost-linear Time Edge Connectivity}\label{sec:edgeconn}
Let $G=(V,E)$ be a simple, undirected, unweighted graph with  minimum degree $\delta$ and edge connectivity
$\lambda.$
For $A,B\subseteq V$, let $E_G(A,B)$ denote the set of edges with one endpoint in $A$ and the other in $B$.  For $S\subseteq V$, let $d_X(v)=|E_G(\{v\},X)|$ and $\vol_G(X)=\sum_{v\in X}d_G(v)$.
A cut is \emph{nontrivial} if each side contains at least two vertices.

We use the following form of deterministic expander decomposition algorithm of~\cite{Saranurak2021EdgeConnectivity}.

\begin{lemma}[Expander decomposition]\label{lem:strong-ed}
For every $\phi \in (0,1)$ and $m$-edge graph $G=(V,E)$ there is a deterministic algorithm $\StrongED(G,\phi)$ running in $O(m^{1+o(1)})$ time  that returns a partition
$\mathcal X=\{X_1,\ldots,X_q\} $ of $V$ into clusters $X_i$
satisfying
\begin{equation}\label{eq:ed-boundary}
\sum_{i=1}^{q}|E_G(X_i,V\setminus X_i)|
 =O(\phi m^{1+o(1)}),
\end{equation}
and, for every $i$ and every nonempty proper set $A\subset X_i$,
\begin{equation}\label{eq:ed-expansion}
|E_G(A,X_i\setminus A)|
 \ge
 \phi\min\{\vol_G(A),\vol_G(X_i\setminus A)\}.
\end{equation}
\end{lemma}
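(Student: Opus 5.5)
The plan is to derive \Cref{lem:strong-ed} from the deterministic balanced sparse cut algorithm of Chuzhoy et al.~\cite{DBLP:conf/focs/ChuzhoyGLNPS20} by recursive cutting, in the form used by~\cite{Saranurak2021EdgeConnectivity}. First, observe what \eqref{eq:ed-expansion} demands: volumes are measured in $G$, not in $G[X_i]$. So we must certify expansion of the graph $G[X_i]^{\circ}$ obtained from $G[X_i]$ by adding, at each vertex $v$, $d_{V\setminus X_i}(v)$ self-loops. Self-loops count toward volume but never cross a cut. Throughout the recursion, a current cluster $X$ is therefore always handled as $G[X]^{\circ}$. Then $\vol_{G[X]^\circ}=\vol_G$ on subsets of $X$, and $\phi$-expansion of $G[X]^\circ$ is exactly \eqref{eq:ed-expansion}.

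Next, the subroutine. On input $H=G[X]^\circ$ and a parameter $\phi'$, the algorithm of~\cite{DBLP:conf/focs/ChuzhoyGLNPS20} runs in $m_H^{1+o(1)}/\phi'$ time, but we use it with $\phi'$ where $1/\phi'=m^{o(1)}$ in the application. It returns a cut $S$ of sparsity at most $\phi' m^{o(1)}$ (the approximation factor $\gamma=m^{o(1)}$). Either $S$ is balanced, with $\vol(S)\ge \vol(H)/b$ for $b=m^{o(1)}$, or $\vol(S)$ is small and $H[X\setminus S]$ is a near $\phi'$-expander.

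\textbf{Balanced case.} We recurse on both $S$ and $X\setminus S$, each again with self-loops for the newly cut edges. This is why the loop convention matters: cut edges become loops, so the volume used in later certificates stays $\vol_G$.

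\textbf{Unbalanced case.} We apply the deterministic trimming step of~\cite{Saranurak2021EdgeConnectivity}. This removes a further set of volume $O(\vol(S))$, cutting at most $O(\phi'\gamma\vol(S))$ additional edges, and leaves a core $X'$ for which $G[X']^\circ$ is a genuine $\Omega(\phi')$-expander. We output $X'$ and recurse on the removed part, whose volume is at most a $(1-1/b)$ fraction of $\vol(H)$.

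Analysis. In both cases each piece passed to recursion has volume at most $(1-1/b)\vol(H)$, so the recursion depth is $O(b\log m)=m^{o(1)}$. At each level, the edges cut are at most $\phi'\gamma$ times the total volume at that level, which is at most $2m$. Summing over levels gives $\sum_i|E_G(X_i,V\setminus X_i)|=O(\phi'\gamma b\, m\log m)$. Taking $\phi'=\Theta(\phi)$ (the final clusters are $\Omega(\phi')$-expanders, so we absorb the constant) yields \eqref{eq:ed-boundary} with the $m^{o(1)}$ factor. The running time is $m^{1+o(1)}$ per level times $m^{o(1)}$ levels. Formally this holds for $\phi\ge m^{-o(1)}$, which is the regime used later. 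For smaller $\phi$ one notes that the claimed bound $\phi m^{1+o(1)}$ becomes trivial once the factor absorbs $1/\phi$, or runs the algorithm with $\phi'=\max\{\phi,m^{-o(1)}\}$.

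The main obstacle is the unbalanced case: converting ``near expander'' into an exact expander certificate with respect to $\vol_G$, while keeping the trimmed volume proportional to $\vol(S)$ so the charging argument survives. I would first try to quote the trimming lemma of~\cite{Saranurak2021EdgeConnectivity}, and check that it applies verbatim to graphs with self-loops. Alternatively, one can avoid trimming by recursing on the unbalanced side too. In that case the progress measure must be the volume of the larger part, which shrinks by a factor $1-1/b$ only after the certificate is obtained, and controlling this is exactly the delicate point.
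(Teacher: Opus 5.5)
The paper does not prove this lemma; it imports it as a black box from \cite{Saranurak2021EdgeConnectivity}, which builds on \cite{DBLP:conf/focs/ChuzhoyGLNPS20}. What the paper relies on from that black box is an $m^{1+o(1)}$ running time that holds for \emph{every} $\phi$, and your reconstruction breaks down exactly there. You use a balanced-cut subroutine costing $m_H^{1+o(1)}/\phi'$ and then assert that $1/\phi'=m^{o(1)}$ in the application. But the application uses $\phi=3/\delta$, and in a simple graph $\delta$ can be polynomial in $m$ (up to $\Theta(\sqrt m)$ for dense graphs). Your recursion then costs $m^{1+o(1)}\delta$, which destroys Theorem~\ref{thm:time}.

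Neither fallback rescues this. Running with $\phi'=\max\{\phi,m^{-o(1)}\}$ keeps \eqref{eq:ed-expansion}, but it yields a boundary of $\phi' m^{1+o(1)}\gg\phi m^{1+o(1)}$, violating \eqref{eq:ed-boundary}. Then $|E(H)|$ in Theorem~\ref{thm:compression} can be $m^{1-o(1)}$, and the call $\ThresholdGabow(H,\delta)$ alone costs about $m^{1-o(1)}\delta$. Hiding a polynomial factor $1/\phi$ inside $m^{o(1)}$ is also not possible without making \eqref{eq:ed-boundary} vacuous. As written, you have at best a proof for $\phi\ge m^{-o(1)}$, which is neither the statement nor the regime the paper needs.

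Second, you correctly identify the unbalanced case as the crux, but you do not actually carry it out. The trimming step of \cite{Saranurak2021EdgeConnectivity} is a degree peeling: it repeatedly deletes $v$ with $d_X(v)<\frac25 d_G(v)$. It certifies nothing about expansion and gives no bound of the form ``removed volume $O(\vol(S))$''. What you need is a flow-based trimming in the style of Saranurak and Wang. Its input would be the certificate that $H[X\setminus S]$ is an expander with respect to its \emph{own} volumes. The edges to $S$ are dropped rather than turned into self-loops, which is exactly why a $\vol_G$ certificate is missing. From this it extracts $X'$ with $G[X']^{\circ}$ an $\Omega(\phi')$-expander.

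That lemma has costs you have not accounted for:
\begin{enumerate}
\item It removes volume $O(|E(X\setminus S,S)|/\phi')=O(\gamma\vol(S))$ rather than $O(\vol(S))$.
\item It requires $|E(X\setminus S,S)|$ to be a small constant fraction of $\phi'\vol(X\setminus S)$, so your threshold $b$ must dominate $\gamma$.
\item Its running time is proportional to $|E(X\setminus S,S)|/\phi'^2$, which again carries a $1/\phi$ factor.
\end{enumerate}
Your bookkeeping (depth $O(b\log m)$, total cut $O(\phi'\gamma b\,m\log m)$) survives these adjustments. What is genuinely missing is a balanced-cut and trimming primitive whose cost is independent of $\phi$, and that is precisely what the paper takes for granted by citing the lemma.
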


Note that the volume in \eqref{eq:ed-expansion} is measured using degrees in the original graph $G$.  This volume is exactly the property needed below.

We also use the following deterministic primitive of Gabow \cite{Gabow1995Matroid}; the thresholded formulation appears, for example, as Lemma~2.2 of Saranurak \cite{Saranurak2021EdgeConnectivity}.

\begin{lemma}[Thresholded Gabow]\label{lem:gabow}
For an $m$-edge selfloop-free multigraph $H$ and an integer $k\ge 1$, there is a deterministic algorithm $\ThresholdGabow(H,k)$ that runs in
$ 
O\bigl(m\min\{\lambda(H),k\}\bigr)
$ 
time and returns a pair $(\min\{\lambda(H),k\}, Q)$.  If $\lambda(H)<k$, $Q$ is a cut attaining $\lambda(H)$, otherwise $Q$ is undefined.
\end{lemma}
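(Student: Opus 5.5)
The plan is to treat this as a direct consequence of Gabow's matroid/arborescence-packing algorithm \cite{Gabow1995Matroid}, instrumented so that it stops after at most $k$ rounds of augmentation. The reduction I would use is Edmonds' branching theorem. Replace every undirected edge of $H$ by two opposite arcs, fix an arbitrary root $r$, and let $\vec H$ be the resulting digraph. Then $\lambda(H)$ equals the maximum number of arc-disjoint spanning arborescences of $\vec H$ rooted at $r$. The reason is that every undirected cut $(S,V\setminus S)$ with $r\in S$ corresponds to exactly $|E_H(S,V\setminus S)|$ arcs leaving $S$. Consequently, certifying $j$ disjoint arborescences certifies $\lambda(H)\ge j$. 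Conversely, a failed attempt to build the $(j+1)$-st arborescence exhibits a set $S\ni r$ whose out-degree in $\vec H$ is exactly $j$, and this set is a cut of $H$ of size $j$.

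The algorithm then proceeds as follows. Set $j=0$, and while $j<k$, call Gabow's augmentation routine to increase the packing from $j$ to $j+1$ arborescences. If the routine succeeds, increment $j$. If it fails, it returns the blocking set $S$, and we output $(j,S)$. If the loop reaches $j=k$, we output $(k,\cdot)$. Correctness is immediate from the reduction. The loop stops at the first $j$ with $j=\lambda(H)$, or at $k$, so the returned value is $\min\{\lambda(H),k\}$. The cut $S$ is only returned when $\lambda(H)=j<k$, which is exactly what the statement requires. Self-loops would break the arc-counting argument, which is why the lemma assumes $H$ is selfloop-free. Parallel edges cause no problem, since they simply become parallel arcs.

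For the running time, at most $\min\{\lambda(H),k\}+1$ augmentation rounds are executed. It therefore suffices that a single round, which extends a packing of $j$ arborescences to $j+1$ or finds the blocking set, costs $O(m)$ amortized. The extraction of $S$ at the failing round must also take $O(m)$; this follows because $S$ is the set of vertices labelled reachable during the failed search.

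I expect this per-round bound to be the main obstacle. Gabow's headline bound carries a $\log(n^2/m)$ factor, and one has to check that this factor comes only from his data structures for \emph{weighted} or dense instances, and not from the basic augmenting-path search. My first attempt would be to use the plain cyclic-scanning augmentation from \cite{Gabow1995Matroid}: each round performs one labelling search over the arcs of $\vec H$ and touches every arc $O(1)$ times, giving $O(m)$ per round. I would verify this bound against his analysis, which is also how Lemma~2.2 of \cite{Saranurak2021EdgeConnectivity} states the primitive. If a residual logarithmic factor survives, the fallback is to invoke the lemma exactly as stated in \cite{Saranurak2021EdgeConnectivity}, treating it as the cited black box that it is in this paper.
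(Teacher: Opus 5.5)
The paper gives no proof of this lemma. It imports it from Gabow \cite{Gabow1995Matroid}, in the thresholded form of Lemma~2.2 of \cite{Saranurak2021EdgeConnectivity}, so your fallback is exactly the paper's route. Your Edmonds reduction and the stop-at-$k$ loop correctly explain how the thresholded, cut-returning form arises from an incremental packing algorithm.

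Your self-contained route, however, breaks at the step you flag, the $O(m)$ cost per round. Your diagnosis of where the logarithm comes from points the wrong way.
\begin{itemize}
\item Gabow's bound is $O(\lambda m\log(n^2/m))$. Since the logarithm multiplies $\lambda$, it is paid in every round, not in one-time preprocessing.
\item It is not a weighted or dense-graph artifact. The connectivity algorithm is unweighted, and $\log(n^2/m)$ is $O(1)$ on dense inputs and largest on sparse ones.
\item A round is not ``one labelling search.'' Raising a complete $j$-intersection to a complete $(j+1)$-intersection takes $n-1$ separate matroid-intersection augmentations. Here a complete $j$-intersection is an arc set that is a union of $j$ forests with in-degree $j$ at every non-root vertex. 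Amortizing these augmentations to near-linear time per round is the actual content of Gabow's round-robin implementation.
\end{itemize}
So the claim that every arc is touched $O(1)$ times per round assumes the hard part rather than deriving it.

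Two smaller points. First, work with complete intersections, not explicit arborescences. Gabow's connectivity bound is obtained by maintaining complete intersections; explicitly packing arborescences is his separate $O((kn)^2)$ result. So ``extend a packing of $j$ arborescences to $j+1$,'' taken literally, would not meet your time bound. Second, self-loops do not break the arc counting: a loop arc never leaves any $S$ and is never used by an arborescence, so the hypothesis only matches the input format.

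What your argument can honestly deliver is $O(m\min\{\lambda(H),k\}\log n)$, for $\lambda(H)\ge 1$. That is all the paper needs, since Theorem~\ref{thm:time} only claims $m^{1+o(1)}$. The log-free form in the statement rests on the citation alone.
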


We next describe a new simple algorithm found by ChatGPT. First we test whether the graph is 3-edge connected and if not, we return a minimum cut and stop. This can be done with a published linear-time 3-edge connectivity algorithm~\cite{taoka1992linear} or by calling $\ThresholdGabow(H,3)$. If it is, we know that $\delta \ge 3$.
Next we call
\[
\mathcal X=\StrongED\left(G,\frac{3}{\delta}\right)
\]
and for every cluster $X\in\mathcal X$, we compute its \emph{one-pass core}
\begin{equation}\label{eq:core-def}
K_X
 =
 \left\{v\in X:
 d_X(v)\ge \frac56 d_G(v)
 \right\}.
\end{equation}
Next we contract every nonempty $K_X$, and keep each  vertex outside all the cores as singleton node. Let $H$ be the multigraph resulting after discarding all self-loops. If $H$ contains only one vertex, return a vertex with minimum degree of $G$ as minimum cut.
On this graph we call $\ThresholdGabow(H,\delta)$ and return a ``lifting'' of the cut  $Q$ it returns, i.e.~the set of all the vertices of $G$ that are represented\footnote{A vertex $v$ of $G$ that also exists in $H$ \textit{is represented} by itself in $H$, a vertex $v$ of $G$ that belongs to a contracted vertex $X$ in $H$ \textit{Is represented} by $H$ in $G.$} by the vertices of $Q.$
We give the corresponding pseudocode in Section~\ref{sec:missing_edgeconn} in the Appendix.

\paragraph{Correctness.}
We first state a well-known fact about nontrivial minimum cuts.
\begin{fact}\label{lem:half-degree}
Let $(S,V\setminus S)$ be a nontrivial global minimum cut.  For every vertex $v$, at most half of the edges incident to $v$ cross the cut.  
\end{fact}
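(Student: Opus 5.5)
The plan is a proof by contradiction that uses only the minimality of the cut together with its nontriviality. Fix a nontrivial global minimum cut $(S,V\setminus S)$, so $|E_G(S,V\setminus S)|=\lambda$. Suppose some vertex $v$ has more than half of its incident edges crossing the cut. Without loss of generality $v\in S$, so $d_{V\setminus S}(v) > d_S(v)$. The natural move is to shift $v$ to the other side and compare the cut sizes.

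Consider the cut $(S\setminus\{v\}, (V\setminus S)\cup\{v\})$. Because the cut is nontrivial, $|S|\ge 2$, so $S\setminus\{v\}$ is nonempty and this is a genuine cut of $G$. Moving $v$ changes only the status of edges incident to $v$, since $G$ is simple and has no self-loops. The $d_{V\setminus S}(v)$ edges from $v$ into $V\setminus S$ stop crossing, and the $d_S(v)$ edges from $v$ into $S\setminus\{v\}$ start crossing. The new cut value is therefore
\[
\lambda - d_{V\setminus S}(v) + d_S(v) < \lambda,
\]
which contradicts the minimality of $\lambda$. Hence $d_{V\setminus S}(v)\le d_S(v)$ for every $v\in S$, meaning at most half of $v$'s edges cross. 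The argument is symmetric for $v\in V\setminus S$, where $|V\setminus S|\ge 2$ again ensures the moved cut is proper.

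The argument has no real obstacle. The only point needing care is where nontriviality is used: it guarantees that the side losing $v$ stays nonempty. For trivial cuts the statement can fail, for example when $S=\{v\}$ and all of $v$'s edges cross.
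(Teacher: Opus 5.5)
Your argument is correct: moving $v$ to the other side changes the cut value by $d_S(v)-d_{V\setminus S}(v)$, and nontriviality keeps the side losing $v$ nonempty, so minimality forces $d_{V\setminus S}(v)\le d_G(v)/2$. The paper gives no proof and cites this as a well-known fact; your vertex-moving exchange argument is the standard justification for it, and it needs only that $G$ has no self-loops, not that $G$ is simple.
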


Our goal is to show that no minimum cut passes through a one-pass core. We do this in two steps, first showing that any such cut would have a very small number of vertices on one side of the cut and then showing that even that is impossible.

\begin{lemma}\label{lem:minority}
Let $(S,V\setminus S)$ be a global minimum cut of value $\lambda$, and let $X\in\mathcal X$.  Then
\begin{equation}\label{eq:minority-size}
\min\{|X\cap S|,|X\setminus S|\}
 \le \frac{\lambda}{3}.
\end{equation}
\end{lemma}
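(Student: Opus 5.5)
Let $A = X\cap S$ and $B = X\setminus S$, and assume both are nonempty, since otherwise the bound is trivial. The plan is to apply the expansion guarantee \eqref{eq:ed-expansion} of \Cref{lem:strong-ed} with $\phi = 3/\delta$ to the cut $(A, B)$ of $X$. Every edge of $E_G(A,B)$ has one endpoint in $S$ and the other in $V\setminus S$, so it crosses the minimum cut. This gives
\[
\lambda \;\ge\; |E_G(A,B)| \;\ge\; \frac{3}{\delta}\min\{\vol_G(A),\vol_G(B)\}.
\]

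Next, convert volume into cardinality. Every vertex has $d_G(v)\ge\delta$, so $\vol_G(A)\ge \delta|A|$ and $\vol_G(B)\ge\delta|B|$. Hence
\[
\min\{\vol_G(A),\vol_G(B)\}\ge \delta\min\{|A|,|B|\}.
\]
Plugging this into the previous display yields $\lambda \ge 3\min\{|A|,|B|\}$, which is exactly \eqref{eq:minority-size}.

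The only point needing care is that the volumes in \eqref{eq:ed-expansion} are measured with degrees in $G$, as remarked after \Cref{lem:strong-ed}. This is precisely what lets us use the minimum degree $\delta$ of $G$ rather than degrees inside the cluster. We also use that $\phi=3/\delta\in(0,1)$, which holds since $\delta\ge 3$ after the 3-edge-connectivity test (if $\delta=3$ then $\phi=1$, and we assume the decomposition still applies, or take $\phi$ slightly smaller without changing the argument). No real obstacle is expected; the lemma is a direct two-line consequence of expansion plus the minimum-degree bound.
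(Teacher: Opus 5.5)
Your proof is correct and matches the paper's argument: apply the expansion guarantee \eqref{eq:ed-expansion} with $\phi=3/\delta$ to $A=X\cap S$, bound $|E_G(X\cap S,X\setminus S)|$ by $\lambda$ because these edges cross the cut, and use $\vol_G(Y)\ge\delta|Y|$. Your aside about $\phi=1$ when $\delta=3$ raises a point the paper does not address, but it is harmless, since the all-singletons partition trivially satisfies both guarantees at $\phi=1$. By contrast, shrinking $\phi$ slightly would need an extra integrality argument to recover the exact bound $\lambda/3$, so that alternative does change the argument.
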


\begin{proof}
If one intersection is empty, the claim is immediate.  Otherwise, apply \eqref{eq:ed-expansion} to $A=X\cap S$ with $\phi=3/\delta$:
\begin{align*}
\lambda
&\ge |E_G(X\cap S,X\setminus S)| \ge \frac{3}{\delta}
   \min\{\vol_G(X\cap S),\vol_G(X\setminus S)\}.
\end{align*}
Recall that $\delta$ is the minimum vertex degree.
Every vertex has degree at least $\delta$, so $\vol_G(Y)\ge\delta|Y|$ for every $Y\subseteq V$.  Therefore
\[
\lambda
\ge 3\min\{|X\cap S|,|X\setminus S|\}. \qedhere
\]
\end{proof}

\begin{lemma}\label{lem:core-monochromatic}
For every cluster $X\in\mathcal X$, every nontrivial global minimum cut places all vertices of $K_X$ on the same side.
\end{lemma}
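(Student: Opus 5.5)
Suppose for contradiction that a nontrivial minimum cut $(S,V\setminus S)$ splits $K_X$. Let $B$ be the smaller of $X\cap S$ and $X\setminus S$ (by cardinality), so $|B|\le \lambda/3$ by \Cref{lem:minority}. Since $K_X$ is split, $B$ contains some core vertex $v\in K_X$. The plan is to show that $v$ then has too many edges crossing the cut, contradicting \Cref{lem:half-degree}.

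\textbf{Step 1 (many edges into the other side of $X$).} Because $v\in K_X$, $v$ has at least $\frac56 d_G(v)$ neighbors in $X$, since the graph is simple. At most $|B|-1$ of them lie in $B$. The rest lie in $X\setminus B$, which is on the opposite side of the cut. So $v$ has at least $\frac56 d_G(v)-|B|+1$ crossing edges.

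\textbf{Step 2 (bound $|B|$ by $d_G(v)$).} We have $|B|\le \lambda/3\le \delta/3\le d_G(v)/3$, using $\lambda\le\delta$. Hence the number of crossing edges at $v$ is at least
\[
\frac56 d_G(v)-\frac13 d_G(v)+1=\frac12 d_G(v)+1>\frac12 d_G(v).
\]
This contradicts \Cref{lem:half-degree}, since the cut is nontrivial and minimum.

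\textbf{Main obstacle.} The only delicate point is the margin: $\frac56-\frac13=\frac12$ exactly. The strict inequality needed against Fact~\ref{lem:half-degree} therefore comes from the $+1$, i.e.\ from $v$ itself being one of the $|B|$ vertices of $B$ and not being its own neighbor. This uses simplicity of $G$ (no self-loops, no parallel edges) to count neighbors as edges.

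The argument is unaffected by which side $B$ lies on, and it does not need $B$ to be small in volume, only in cardinality. \Cref{lem:minority} already gives this using $\phi=3/\delta$.
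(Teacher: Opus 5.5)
Your proposal is correct and follows essentially the same argument as the paper: take the smaller-cardinality part of $X$, bound its size by $\lambda/3\le\delta/3\le d_G(v)/3$ via \Cref{lem:minority}, and use simplicity to get at least $\frac12 d_G(v)+1$ crossing edges at a core vertex in it, which contradicts \Cref{lem:half-degree}. The only cosmetic difference is that you argue by contradiction from $K_X$ being split, whereas the paper directly shows $K_X\cap A_X=\varnothing$.
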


\begin{proof}
Fix a nontrivial global minimum cut $(S,V\setminus S)$.  Let
$A_X\in\{X\cap S,X\setminus S\}$
be an intersection of smaller cardinality, with ties broken arbitrarily.  By Lemma~\ref{lem:minority} and the fact that $\lambda\le\delta$ it holds that
\begin{equation}\label{eq:AX-bound}
|A_X|\le\frac{\lambda}{3}\le\frac{\delta}{3}
\le\frac{d_G(v)}{3}.
\end{equation}
Assume for contradiction that some $v\in K_X\cap A_X$ exists.  Since $G$ is simple, $v$ has at most $|A_X|-1$ neighbors in $A_X$.  By the definition of $K_X$,
$d_X(v)\ge\frac5 6 d_G(v).$ 
Consequently, the number of edges from $v$ to $X\setminus A_X$ is at least
$|E_G(\{v\},X\setminus A_X)|
\ge d_X(v)-(|A_X|-1)\notag
\ge \frac5 6 d_G(v)-|A_X|+1.$
Note that each such edge crosses the cut ($S, V\setminus S)$.
Combined with \eqref{eq:AX-bound} this yields that the number of edges incident to $v$ that cross the minimum cut $(S, V \setminus S)$ is at least 
$\frac5 6d_G(v)-\frac1{3}d_G(v)+1
 =\frac12d_G(v)+1,$ 
contradicting Fact~\ref{lem:half-degree}.  Hence $K_X\cap A_X=\varnothing$, so every vertex of $K_X$ lies on the other side of the cut.  Since the chosen nontrivial minimum cut was arbitrary, the conclusion holds for all such cuts.
\end{proof}
This is the crucial lemma to show correctness as it leads to following important corollary.

\begin{corollary}\label{lem:survives}
Every nontrivial global minimum cut of $G$ induces a cut of $H$ with exactly the same value.
\end{corollary}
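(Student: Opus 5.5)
The plan is to derive the corollary directly from Lemma~\ref{lem:core-monochromatic}, using how $H$ is built from $G$. Fix a nontrivial global minimum cut $(S,V\setminus S)$ of $G$. The vertices of $H$ are the contracted cores $K_X$ (for nonempty $K_X$) and the singleton vertices lying outside every core. These sets partition $V$, since the clusters $X$ partition $V$ and $K_X\subseteq X$. By Lemma~\ref{lem:core-monochromatic}, each $K_X$ lies entirely in $S$ or entirely in $V\setminus S$. Singletons trivially lie on one side. Hence $S$ is a union of the vertex sets represented by some nodes of $H$. I define $S_H$ to be the set of nodes of $H$ whose represented vertices lie in $S$. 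Lifting $S_H$ then recovers exactly $S$.

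Next I would check that $(S_H, V(H)\setminus S_H)$ is a genuine cut of $H$, meaning both sides are nonempty. Since $S\neq\varnothing$ and $V\setminus S\neq\varnothing$, and every vertex of $G$ is represented by some node of $H$, both $S_H$ and its complement are nonempty.

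Then I would compare the values edge by edge. Contraction maps every edge $\{u,w\}$ of $G$ to an edge between the nodes representing $u$ and $w$. It becomes a self-loop exactly when $u$ and $w$ lie in the same core $K_X$, and those self-loops are discarded. This gives a bijection between the non-loop edges of $H$ and the edges of $G$ whose endpoints are in different nodes. Take an edge crossing $(S,V\setminus S)$. Its endpoints are on different sides, so by monochromaticity they are not in the same core. The edge therefore survives in $H$, and it crosses $(S_H,\cdot)$. Conversely, an edge of $H$ crossing $S_H$ comes from an edge of $G$ whose endpoints are represented on different sides, so it crosses $S$. Thus $|E_H(S_H,V(H)\setminus S_H)| = |E_G(S,V\setminus S)| = \lambda$.

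No step here is really hard. The only care needed is to note that the monochromaticity given by Lemma~\ref{lem:core-monochromatic} is exactly what guarantees that no crossing edge is turned into a self-loop.
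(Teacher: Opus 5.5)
Your proposal is correct and follows the same approach as the paper: the monochromaticity from Lemma~\ref{lem:core-monochromatic} shows that every node of $H$ lies on one side of the cut, and every crossing edge therefore survives contraction. You simply fill in details the paper leaves implicit, namely that both sides of the induced cut of $H$ are nonempty and that the non-loop edges of $H$ correspond bijectively to the edges of $G$ whose endpoints lie in different nodes.
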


\begin{proof}
By Lemma~\ref{lem:core-monochromatic}, each contracted vertex $K_X$ lies wholly on one side of every nontrivial global minimum cut.  Singleton vertices trivially do as well.  All the edges in $G$ between different vertices of $H$ are preserved in $H$. Therefore, the corollary follows.
\end{proof}
As $H$ is constructed from $G$ by contractions, if 
$H$ has at least two vertices, then
\begin{equation}\label{eq:no-smaller}
\lambda(H)\ge\lambda(G).
\end{equation}
This allows us to show the correctness of the algorithm.

\begin{theorem}\label{thm:correctness}
Algorithm~\ref{alg:one-pass} always returns a global minimum cut of the input graph.
\end{theorem}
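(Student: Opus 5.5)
The plan is to go through the branches of Algorithm~\ref{alg:one-pass} in order and show that each branch returns a set of value $\lambda(G)$. The case analysis rests on Corollary~\ref{lem:survives} and inequality~\eqref{eq:no-smaller}.

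\textbf{Low connectivity and trivial cuts.} If the initial 3-edge-connectivity test fails, the returned cut is a minimum cut by the guarantee of that subroutine (or of $\ThresholdGabow(G,3)$ via Lemma~\ref{lem:gabow}). Otherwise $\lambda\ge 3$, so $\delta\ge 3$ and $\phi=3/\delta\in(0,1)$ is a valid parameter. Lemma~\ref{lem:strong-ed}, and with it Lemmas~\ref{lem:minority} and~\ref{lem:core-monochromatic}, then apply.

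Next I split on whether $G$ has a nontrivial minimum cut. If it has none, every minimum cut is trivial, so $\lambda=\delta$ and a minimum-degree vertex is a minimum cut. In this case:
\begin{itemize}
\item If $H$ has one vertex, the algorithm returns exactly such a vertex, so it is correct.
\item If $H$ has at least two vertices, \eqref{eq:no-smaller} gives $\lambda(H)\ge\lambda(G)=\delta$. Hence $\ThresholdGabow(H,\delta)$ reports value $\delta$ with $Q$ undefined. The algorithm must then fall back to returning a minimum-degree vertex; I would check that the pseudocode does this.
\end{itemize}

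\textbf{Nontrivial minimum cut exists.} Suppose $(S,V\setminus S)$ is a nontrivial minimum cut. By Corollary~\ref{lem:survives}, every contracted node lies on one side of it, and so does every singleton. Hence $S$ is a union of nodes of $H$, and it induces a proper nonempty cut of $H$ of value $\lambda$. In particular $H$ has at least two vertices, so the one-vertex branch is not taken. Combined with \eqref{eq:no-smaller} this gives $\lambda(H)=\lambda(G)$.

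If $\lambda<\delta$, Lemma~\ref{lem:gabow} returns a cut $Q$ of $H$ of value $\lambda(H)=\lambda$. Lifting $Q$ means taking the union of the vertex sets represented by the nodes of $Q$. Every edge of $H$ corresponds to a $G$-edge between distinct nodes, and the only discarded edges are self-loops, which lie inside a single node. So the lifted set has exactly the same number of crossing edges in $G$, namely $\lambda$. If instead $\lambda=\delta$, then $Q$ is undefined and a minimum-degree vertex is already optimal; this is the same fallback as before.

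\textbf{Main obstacle.} The key step is that contraction preserves cut values exactly in both directions. In one direction, any cut of $H$ lifts to a cut of $G$ of equal value, which gives \eqref{eq:no-smaller}. In the other direction, the preserved nontrivial minimum cut shows $\lambda(H)\le\lambda$. The remaining care point is matching the $\lambda=\delta$ / undefined-$Q$ case to the pseudocode's fallback.
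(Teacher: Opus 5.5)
Your proof is correct. It uses the same ingredients as the paper (Corollary~\ref{lem:survives}, \eqref{eq:no-smaller}, Lemma~\ref{lem:gabow}) but splits cases differently: the paper splits on $|V(H)|$, while you split on whether $G$ has a nontrivial minimum cut.

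Your split is the more careful one. In its Case~2 the paper derives $\lambda(H)\le\lambda(G)$ from Corollary~\ref{lem:survives}. That corollary says nothing when every minimum cut of $G$ is trivial, and in that situation $\lambda(H)>\lambda(G)$ can really occur. For example, take two disjoint copies of $K_a$ with $a\ge 11$, joined by a $2$-regular bipartite graph. Then:
\begin{itemize}
\item $\delta=\lambda(G)=a+1$, and every minimum cut is trivial;
\item the partition into the two cliques satisfies both guarantees of Lemma~\ref{lem:strong-ed};
\item each clique is its own core;
\item $H$ is two nodes joined by $2a$ parallel edges, so $\lambda(H)=2a>a+1$.
\end{itemize}
The paper also says the algorithm returns the lifting of $Q$. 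But whenever $\lambda(G)=\delta$, the call $\ThresholdGabow(H,\delta)$ returns $c=\delta$ with $Q$ undefined.

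Your case analysis sends exactly these situations to the fallback $(\delta,\{r\})$ of Algorithm~\ref{alg:one-pass}, and the pseudocode does contain that fallback. So your version checks every branch of the algorithm. The paper's split is shorter and mirrors the algorithm's branching, but it leaves these cases to the reader.

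A cosmetic point you share with the paper: for $\delta=3$ you get $\phi=1$. So the claim $\phi\in(0,1)$ should read $\phi\in(0,1]$, meaning Lemma~\ref{lem:strong-ed} is invoked at the boundary of its stated range.
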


\begin{proof}
If $G$ has a mininum cut size of at most 2, the algorithm returns such a cut and stops.
As a result $\delta \ge \lambda \ge 3$. 
We now consider two cases.

\textit{Case 1: $|V(H)| = 1$.}
In this case, all vertices of $G$ belong to the same one-pass core $X \in \cal{X}$. By Lemma~\ref{lem:core-monochromatic} no non-trivial global minimum cut passes through $X.$ Thus, the global minimum cut must be trivial, i.e.~$\lambda = \delta$. In this case, our algorithm returns $\delta$ as minimum cut value.

\textit{Case 2: $|V(H)| > 1$.}
As $H$ has at least two vertices, $\lambda({H}) \ge \lambda(G)$. By Lemma~\ref{lem:survives}, 
$\lambda(H)\le\lambda(G)$. Thus, $ 
\lambda(H)=\lambda(G)$ and is the value that our algorithm returns, together with a ``lifting'' of the minimum cut $Q$ found in $H$, i.e.,~the union of the original vertices in $Q$.
\end{proof}

\paragraph{Running Time}
The crucial piece in the running time analysis is an upper bound on the size of $H.$
Let $B=\{uv\in E:u\text{ and }v\text{ lie in different clusters of }\mathcal X\}$ 
be the set of \textit{intercluster edges} of $G$.  Every edge of $B$ is counted twice in the left side of \eqref{eq:ed-boundary}; hence, for $\phi=3/\delta$,
\begin{equation}\label{eq:B-bound}
|B|=O\left(\frac{m^{1+o(1)}}{\delta}\right).
\end{equation}
For each cluster $X$, define the \textit{light} set
$R_X=X\setminus K_X$ and let 
$R=\bigcup_{X\in\mathcal X}R_X$.

\begin{lemma}\label{lem:light-charge}
For every $v\in R_X$,
\begin{equation}\label{eq:light-vertex}
d_G(v)<6 d_{V\setminus X}(v).
\end{equation}
Consequently,
\begin{equation}\label{eq:light-volume}
\sum_{v\in R}d_G(v)<12|B|.
\end{equation}
\end{lemma}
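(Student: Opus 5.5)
The first claim follows directly from the definition of the one-pass core in \eqref{eq:core-def}. I will then sum the per-vertex bound over $R$ and regroup the sum edge by edge over the intercluster edges $B$.

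\textbf{Per-vertex bound.} Fix a cluster $X$ and a vertex $v\in R_X=X\setminus K_X$. Since $v\notin K_X$, we have $d_X(v)<\frac56 d_G(v)$. Because $X$ and $V\setminus X$ partition the neighbourhood of $v$, $d_G(v)=d_X(v)+d_{V\setminus X}(v)$. Hence
\[
d_{V\setminus X}(v)=d_G(v)-d_X(v)>\tfrac16 d_G(v),
\]
which is exactly \eqref{eq:light-vertex}. This step is purely definitional.

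\textbf{Summing over $R$.} The sets $R_X$ are disjoint, since the clusters partition $V$. Summing \eqref{eq:light-vertex} gives
\[
\sum_{v\in R} d_G(v) < 6\sum_{X\in\mathcal X}\sum_{v\in R_X} d_{V\setminus X}(v)\le 6\sum_{X\in\mathcal X}\sum_{v\in X} d_{V\setminus X}(v).
\]
For $v\in X$, the quantity $d_{V\setminus X}(v)$ counts the edges of $B$ incident to $v$. So the double sum over all clusters $X$ and all $v\in X$ counts each intercluster edge $uv\in B$ exactly twice, once from the cluster of $u$ and once from the cluster of $v$. It therefore equals $2|B|$, which yields $\sum_{v\in R}d_G(v)<12|B|$, i.e.\ \eqref{eq:light-volume}.

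\textbf{Potential obstacles.} The only point needing care is the strictness of the inequality when $R$ is empty, where the left side is $0$. If $R=\emptyset$, the strict bound $0<12|B|$ requires $B\neq\emptyset$. I would handle this by treating the empty case as trivial, reading the inequality as non-strict there, since it is only used as an upper bound in the running-time analysis. Otherwise the argument is routine double counting, and I expect no real obstacle.
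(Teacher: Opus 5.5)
Your proof is correct and follows the paper's argument essentially verbatim. The per-vertex bound comes from $v\notin K_X$; you then sum over the disjoint $R_X$, enlarge to all of $X$, and double count the intercluster edges to get $2|B|$. Your remark on strictness is well taken, since the paper's chain ignores it. Strictness genuinely fails when $R=B=\emptyset$, e.g.\ a single cluster $X=V$. Whenever $R\neq\emptyset$ it holds, because a light vertex must have $d_G(v)>0$. So the non-strict bound $\le 12|B|$ is the safe general statement, and it is all the running-time analysis needs.
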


\begin{proof}
Since $v\notin K_X$,
$d_X(v)<\frac5 6d_G(v).$ 
Therefore
\[
d_{V\setminus X}(v)
=d_G(v)-d_X(v)
>\frac1 6d_G(v),
\]
which is equivalent to \eqref{eq:light-vertex}.  Summing over all light vertices gives
\[
\sum_{v\in R}d_G(v)< 6
  \sum_{X\in\mathcal X}
  \sum_{v\in R_X}d_{V\setminus X}(v)
\le 6
  \sum_{X\in\mathcal X}
  \sum_{v\in X}d_{V\setminus X}(v)
= 12 |B|. \qedhere
\]
\end{proof}

\begin{theorem}\label{thm:compression}
The number of edges of the multigraph $H$ is
\begin{equation}\label{eq:H-edge-bound}
|E(H)|=O\left(\frac{m^{1+o(1)}}{\delta}\right).
\end{equation}
\end{theorem}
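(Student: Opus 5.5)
Every edge of $H$ comes from an edge of $G$ whose endpoints lie in different nodes of $H$. The nodes of $H$ are the contracted cores $K_X$ and the singleton light vertices in $R$. So I would classify each surviving edge $uv$ of $G$ by where its endpoints lie and charge it either to $B$ or to the light volume. Concretely, fix an edge $uv\in E$ that is not a self-loop in $H$. There are two possibilities.

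\textbf{Endpoints in different clusters.} Then $uv\in B$.

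\textbf{Endpoints in the same cluster $X$.} They cannot both lie in $K_X$, since then $uv$ would be contracted into a self-loop and discarded. Hence at least one endpoint lies in $R_X\subseteq R$.

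It follows that
\[
|E(H)|\le |B| + \sum_{v\in R} d_G(v),
\]
where the second term bounds the number of edges with at least one light endpoint. This sum counts each such edge at least once, and overcounting only helps.

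Next I would apply \eqref{eq:light-volume} from Lemma~\ref{lem:light-charge}, which gives $\sum_{v\in R}d_G(v)<12|B|$. Therefore $|E(H)|<13|B|$. Combining this with \eqref{eq:B-bound}, namely $|B|=O(m^{1+o(1)}/\delta)$, yields \eqref{eq:H-edge-bound}.

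I expect no real obstacle here. The only point needing care is the case analysis: it must show that an intra-cluster edge between two core vertices truly vanishes. Two facts make this work. Cores are defined per cluster, so $K_X\subseteq X$. And each nonempty $K_X$ is contracted to a single node, so such an edge becomes a self-loop. It is also worth noting that edges between cores of different clusters are already in $B$, so no separate case is needed for them.
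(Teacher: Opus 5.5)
Your proposal is correct and follows the paper's proof essentially verbatim. Every surviving edge of $H$ is either an intercluster edge in $B$ or an intra-cluster edge with at least one light endpoint. The latter are bounded by the light volume $\sum_{v\in R} d_G(v) < 12|B|$ from Lemma~\ref{lem:light-charge}, which gives $|E(H)| < 13|B|$, and \eqref{eq:B-bound} finishes the argument.
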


\begin{proof}
Every original edge that survives as an edge of $H$ is of one of the following types:
(1) an intercluster edge, belonging to $B$; or
(2) an edge internal to a cluster $X$ with at least one endpoint in $R_X$.
Indeed, any intracluster edge whose two endpoints both lie in $K_X$ becomes a loop and is discarded.

Let $F$ be the set of surviving edges of type (2).  Every edge in $F$ is incident to at least one light vertex, and therefore
$|F|\le\sum_{v\in R}d_G(v)<12|B|$ 
by Lemma~\ref{lem:light-charge}.  Consequently,
\[|E(H)|
\le |B|+|F| < 13|B|
=O\left(\frac{m^{1+o(1)}}{\delta}\right). \qedhere\]
\end{proof}

The rest of the running time analysis follows as in~\cite{KawarabayashiThorup2018} and is given in \Cref{ap:EC_runtime} for completeness. This completes the proof of Theorem~\ref{thm:conn}.

\subsection{Simplified Dynamic Algorithm}
In  this section we simplify the dynamic sublinear time algorithm of Goranci et al.~\cite{goranci2024fully} using the idea of the previous section. Note that this simplification was not proposed by an LLM.

In the dynamic setting, the edge connectivity of a graph has to be maintained while the graph is modified through edge insertions and deletions. Goranci et al.~\cite{goranci2024fully} showed the following theorem.
\begin{theorem}
    Given an undirected, unweighted $n$-vertex, $m$-edge graph $G=(V,E),$  there is a fully dynamic deterministic algorithm that processes an online sequence of edge insertions or deletions and maintains the edge connectivity of $G$ in $\tilde{O}(m^{1-1/31})$ amortized update time.
\end{theorem}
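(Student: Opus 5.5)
We follow the framework of Goranci et al.~\cite{goranci2024fully} and change only the contraction step. Their algorithm periodically recomputes an expander decomposition, trims and shaves each cluster, contracts the resulting cores, and then maintains the minimum cut of the contracted graph under updates. We replace trimming and shaving with the one-pass core $K_X$ from \eqref{eq:core-def}, computed once per rebuild.

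The proof splits into two regimes around a threshold $\delta_0=m^{c}$ for a constant $c$ tuned at the end.

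\emph{Small minimum degree.} If $\delta<\delta_0$, then $\lambda\le\delta<\delta_0$. We maintain a small certificate, for instance by rerunning $\ThresholdGabow$ with threshold $\delta_0$ on a sparse $\delta_0$-connectivity certificate, exactly as in \cite{goranci2024fully}. This part is untouched by our modification.

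\emph{Large minimum degree.} We rebuild every $T$ updates.
\begin{enumerate}
\item At a rebuild, run $\StrongED(G,3/\delta)$ and form $H$ by contracting the one-pass cores. By Theorem~\ref{thm:compression}, $H$ has $O(m^{1+o(1)}/\delta)$ edges, and by Corollary~\ref{lem:survives} it preserves every nontrivial minimum cut.
\item Between rebuilds, each update changes the degree of a core vertex by at most one. This can break the condition $d_X(v)\ge \frac56 d_G(v)$ for that vertex, and can also break the bound of Lemma~\ref{lem:minority}.
\item To absorb this, argue robustness. Lemma~\ref{lem:core-monochromatic} has slack: a violating core vertex gets $\ge \frac12 d_G(v)+1$ cut edges, so the argument tolerates changes of order $\delta$. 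The expansion inequality \eqref{eq:ed-expansion} also degrades only additively by the number of deleted edges inside $X$.
\item Hence, as long as $T\le \varepsilon\delta$ for a small constant $\varepsilon$, every nontrivial minimum cut of the current graph still keeps each (old) core on one side. The point is that a cut placing a core vertex $v$ on the minority side forces roughly $(\frac12+\Omega(1))d_G(v)-T$ crossing edges at $v$, contradicting Fact~\ref{lem:half-degree}.
\item So we keep the old contraction and apply each update directly to $H$: an edge between two cores or singletons becomes an edge of $H$, and an edge inside a core becomes a self-loop and is ignored.
\item After each update, answer by running $\ThresholdGabow(H,\delta)$ in $O(|E(H)|\,\delta)$ time, or use the same dynamic cut data structure on the contracted graph as \cite{goranci2024fully}. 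Also compare the result against the current minimum degree to cover trivial cuts.
\end{enumerate}

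\emph{Balancing.} The rebuild costs $m^{1+o(1)}$, amortised over $T=\Theta(\delta)$ updates. Each update's query costs at most what \cite{goranci2024fully} pays on a contracted graph of the same size, since our $H$ is no larger than theirs. Balancing the rebuild cost, the query cost and the small-$\delta$ branch with the same exponents as in \cite{goranci2024fully} therefore recovers $\tilde O(m^{1-1/31})$. This works whenever their expander decomposition's subpolynomial factors are absorbed the same way; otherwise we substitute their decomposition routine and keep only the core step.

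\emph{Main obstacle.} The hardest step is the robustness claim in the third and fourth items. Lemma~\ref{lem:minority} relies on the exact expansion of each cluster and on $\lambda\le\delta$ in the current graph, and deletions can lower $\delta$ and $\lambda$. I would first try the following. Lemma~\ref{lem:minority} still gives a minority side of size at most roughly $(\lambda+T)/3$ with respect to the current cut. With the core threshold $\frac56$ and $T\le\delta/12$, the counting in Lemma~\ref{lem:core-monochromatic} still yields more than $\frac12$ crossing edges at $v$. If this slack is insufficient, I would raise the core threshold from $\frac56$ (for example to $\frac{11}{12}$) while keeping the constant in Lemma~\ref{lem:light-charge}, at the cost of a larger constant in \eqref{eq:H-edge-bound}.
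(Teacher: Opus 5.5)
Your route differs from the paper's (periodic static rebuilds with stale cores, instead of keeping the dynamic decomposition and maintaining cores exactly), and as written it has two genuine gaps.

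\textbf{The robustness step (items 3--4) fails with the paper's constants.} The counting in Lemma~\ref{lem:core-monochromatic} has slack of exactly one edge, not of order $\delta$. With $\phi=3/\delta$ and threshold $\frac56$, a core vertex on the minority side gets $\frac56 d_G(v)-\frac13 d_G(v)+1=\frac12 d_G(v)+1$ crossing edges. Suppose that after the rebuild $T$ edges are inserted between a core vertex $v$ and vertices outside $X$. Then in the current graph $d_X(v)$ can equal $\frac56(d_G(v)-T)$, while Lemma~\ref{lem:minority} is unaffected. The count drops to $\frac12 d_G(v)-\frac56T+1$, which no longer contradicts Fact~\ref{lem:half-degree} once $T\ge 2$. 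Deletions inside $X$ make it worse, weakening Lemma~\ref{lem:minority} to $(\lambda+T)/3$. So neither your claimed $(\frac12+\Omega(1))d_G(v)-T$ nor your first attempt (threshold $\frac56$, $T\le\delta/12$) goes through. Your fallback does work and should be the actual construction. With threshold $\frac{11}{12}$ you get at least $\frac7{12}d_G(v)-\frac43T+1$ crossing edges, which exceeds $\frac12 d_G(v)$ whenever $T\le\delta_0/20$, where $\delta_0$ is the minimum degree at the rebuild. Alternatively, keep $\frac56$ and use $\phi=6/\delta$. With threshold $\frac{11}{12}$ the constant in Lemma~\ref{lem:light-charge} becomes $12$, contrary to your remark.

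\textbf{The more serious gap is that the $\tilde O(m^{1-1/31})$ bound is asserted, not derived.}
\begin{enumerate}
\item The only concrete query you give, $\ThresholdGabow(H,\delta)$ after every update, is bounded only by $O(|E(H)|\delta)=m^{1+o(1)}$, so it is not sublinear.
\item Your small-$\delta$ branch (rerunning Gabow on a $\delta_0$-certificate after every update) is at least linear in the certificate. That certificate can be essentially all of $G$ when $m=O(n\delta_0)$.
\item ``Balancing with the same exponents'' presupposes your cost function equals that of Goranci et al.~\cite{goranci2024fully}. Yours has a different large-$\delta$ term: $m^{1+o(1)}/\delta$ amortized from static $\StrongED$ rebuilds, plus rebuilding from scratch whatever lives on $H$. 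It also carries the $m^{o(1)}$ overhead, which you leave unresolved.
\end{enumerate}

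\textbf{How the paper avoids this.} The paper does not touch the dynamic framework. It keeps the dynamic expander decomposition of~\cite{goranci2021expander} and everything downstream in~\cite{goranci2024fully}, and only maintains the one-pass cores \emph{exactly} by storing $d_G(v)$ and $d_X(v)$. Lemma~\ref{lem:core-monochromatic}, Corollary~\ref{lem:survives} and Theorem~\ref{thm:compression} then hold for the current graph at every step, so no robustness lemma is needed. The only new cost is $O(1)$ membership checks per edge update, plus $O(\sum_{v\in D}d_G(v))$ to refresh $d_X$ for $D$ and its neighbours when the decomposition moves a set $D$. That is dominated by the $\Omega(\sum_{v\in D}d_G(v)/\phi^2)$ amortized time the decomposition already spends, so their exponent carries over. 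Your last resort (``substitute their decomposition routine and keep only the core step'') is this route, but it needs exactly this charging argument, which you do not give.
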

Their algorithm is basically a dynamic version of the static algorithm by~\cite{Saranurak2021EdgeConnectivity}: Dynamically maintain an expander decomposition with parameter $\Theta(\delta)$ of $G$ using the algorithm of~\cite{goranci2021expander}, where $\delta$ is the minimum degree in the current graph. This gives a vertex partitioning $\{X_1, \dots ,X_q\}$ of the graph into clusters $X_i$. Then refine this partition using the same pruning and shaving steps as the static algorithm of~\cite{Saranurak2021EdgeConnectivity}, but update the pruning and shaving steps after each edge update in $G$. Finally  contract each cluster into one vertex creating the graph $H$, sparsify $H$ using the algorithm by Nagamochi and Ibaraki~\cite{nagamochi1992linear} and compute the edge connectivity in the sparsified graph using the \ThresholdGabow of Lemma~\ref{lem:gabow}.

Just like for the static setting this dynamic algorithm can be simplified by replacing the pruning and shaving steps by maintaining for each vertex set $X_i$ the one-pass core $K_{X_i}$ using the same straightforward dynamic algorithm that~\cite{goranci2024fully} used to implement the shaving step: maintain for each vertex $v$ its degree $d_G(v)$ in $G$ as well as its degree $d_X(v)$ in its current cluster $X$. Whenever the dynamic expander decomposition changes after an edge update, it outputs all the set $D$ of vertices whose cluster $X_i$ changes. It spends amortized time $\Omega(\sum_{v \in D}d_G(v)/\phi^2)$. Thus the asymptotic running time does not increase if we add $O(\sum_{v \in D} d_G(v))$ time, which is necessary to check for each vertex in $D$ as well as for each of its neighbors in $G$ whether they belong to the one-pass core of their current cluster as $d_G(v)$ and $d_X(v)$ can be updated in constant time and each such check can then be performed in constant time.

\section*{Acknowledgements}
\noindent\begin{minipage}[t]{\dimexpr\linewidth-5.5cm\relax}%
\vspace{0pt}%
\indent 
This project was supported by the European Research Council
(ERC) under the European Union’s Horizon 2020 research and innovation programme (Grant agreement
No. 101019564) and the Austrian Science Fund (FWF) grant \href{https://www.doi.org/10.55776/I5982}{DOI~10.55776/I5982}. 
\end{minipage}
\begin{minipage}[t]{5cm}%
\vspace{0pt}%
\includegraphics[width=5cm]{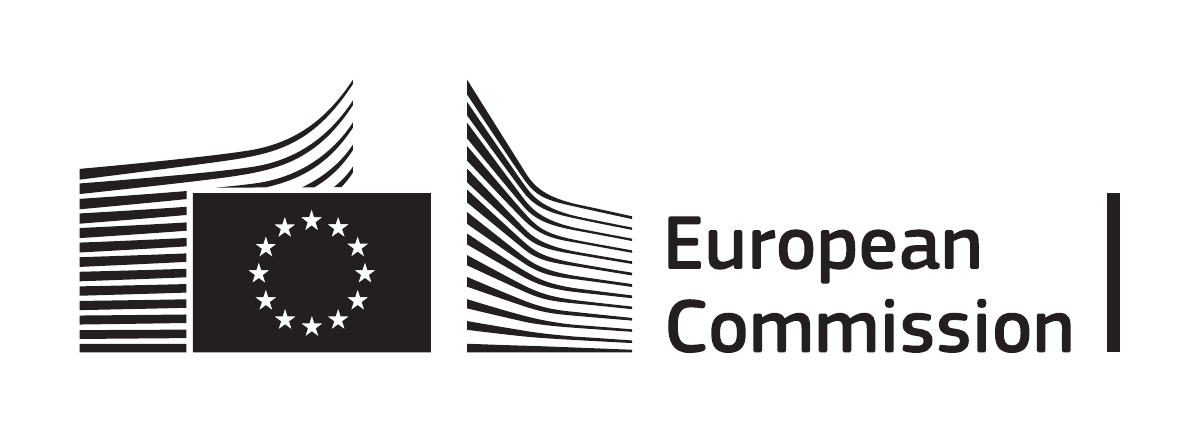}%
\end{minipage}

\vspace{2pt}%
For open access
purposes, the author has applied a CC BY public copyright license to any author-accepted manuscript
version arising from this submission. Views and opinions expressed are however those of the author(s)
only and do not necessarily reflect those of the European Union or the European Research Council
Executive Agency. Neither the European Union nor the granting authority can be held responsible
for them.

\bibliographystyle{alpha}
\bibliography{references}
\newpage
\appendix\label{sec:appendix}

\section{Prompt example}
\label{prompt}

Here is an example of a prompt we fed the LLMs. The grey paragraph is included only for the strict version of the problem. The blue part varied for the different problems.

\begin{quote}
\ttfamily\hyphenchar\font=`\-\spaceskip=.5em plus .5em\xspaceskip=.5em

You are a world-class theoretical computer scientist and algorithm designer, renowned for finding impossibly elegant, structurally simple solutions to dense mathematical problems. I am providing you with \textcolor{blue}{the recent breakthrough paper by Kawarabayashi and Thorup: Deterministic edge connectivity in near-linear time which I upload as PDF to this request.}

The Objective:

\textcolor{blue}{
Corollary  1.2 of this paper gives a deterministic algorithm for finding a minimum cut of a simple graph in near-linear time.}
The authors achieve this using a highly complex algorithm. I want you to invent a fundamentally simpler \textcolor{blue}{deterministic algorithm that achieves almost the same asymptotic worst-case running time.}
 
Output Requirements:

The Novel Architecture: Explain the core mechanics of your new, simpler approach.
 
Pseudocode: Provide clean, step-by-step pseudocode for your novel algorithm.
 
Rigorous Proof: Write a complete, researcher-level mathematical proof of correctness and time complexity. You must rigorously prove your results.
 
{\color{gray} Do not search online for any existing result and attempt to derive the simplification yourself, as searching online may contaminate novelty of your results. Public search can be used for ordinary mathematical background or standard named theorems, not to search for existing solutions to this exact problem or benchmark.}
 
Do not give partial reductions, and do not return because current approaches fail or you give gaps that reduce this problem to an equally difficult problem.
\end{quote}

\section{Algorithm and runtime analysis of Section~\ref{sec:edgeconn}}\label{sec:missing_edgeconn}
\begin{algorithm}[H]
\caption{\textsc{One-Pass-MinCut}$(G)$}\label{alg:one-pass}
\begin{algorithmic}[1]
\Require A simple, undirected, unweighted graph $G=(V,E)$ with $|V|\ge 2$.
\Ensure A nonempty proper set $S\subset V$ with $|\boundary_G(S)|=\lambda(G)$.
\State Compute the connected components of $G$.
\If{$G$ is disconnected}
    \State \Return the vertex set of any proper connected component.
\EndIf
\State $\delta\gets \min_{v\in V}d_G(v)$.
\State Choose $r\in V$ with $d_G(r)=\delta$.
\State $(c,Q)\gets\ThresholdGabow(G,3)$.
\If{$c<3$}
    \State \Return $Q$.
\EndIf

\State $\phi\gets 3/\delta$.
\State $\mathcal X=\{X_1,\ldots,X_q\}\gets\StrongED(G,\phi)$.
\State Initialise $\operatorname{inside}[v]\gets 0$ for every $v\in V$.
\For{each edge $uv\in E$}
    \If{$u$ and $v$ belong to the same cluster $X_i$}
        \State $\operatorname{inside}[u]\gets\operatorname{inside}[u]+1$.
        \State $\operatorname{inside}[v]\gets\operatorname{inside}[v]+1$.
    \EndIf
\EndFor
\For{each cluster $X\in\mathcal X$}
    \State $K_X\gets\{v\in X:\operatorname{inside}[v]\ge 5d_G(v)/6\}$.
\EndFor
\State Construct the quotient multigraph $H$ from $G$ by collapsing every set $K_X$ into one vertex retaining one edge for every original edge joining two distinct nodes of $H$ and discarding self-loops.
\If{$|V(H)|=1$}
    \State \Return $(\delta,\{r\})$.
\EndIf
\State $(c,Q)\gets\ThresholdGabow(H,\delta)$.
\If{$c<\delta$}
    \State $S\gets$ the union of the original vertices  represented in $Q$.
    \State \Return  $(c,S)$.
\Else
    \State \Return $(\delta,\{r\})$.
\EndIf
\end{algorithmic}
\end{algorithm}

\subsection{Running Time Analysis}
\label{ap:EC_runtime}
We prove the running time of our algorithm as follows.

\begin{theorem}\label{thm:time}
On a connected simple graph, Algorithm~\ref{alg:one-pass} runs deterministically in
$m^{1+o(1)}$ time.  
\end{theorem}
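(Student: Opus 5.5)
We bound the cost of Algorithm~\ref{alg:one-pass} line by line; every step except the final call to \ThresholdGabow is near-linear or given by a black box.

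\emph{Cheap steps.} Computing connected components, the minimum degree $\delta$ and a vertex $r$ of degree $\delta$ takes $O(m)$ time. By Lemma~\ref{lem:gabow}, the call $\ThresholdGabow(G,3)$ takes $O(m\cdot 3)=O(m)$ time. By Lemma~\ref{lem:strong-ed}, the call $\StrongED(G,3/\delta)$ takes $m^{1+o(1)}$ time; this is valid because $\delta\ge 3$ at this point, so $\phi=3/\delta\in(0,1]$. The boundary case $\delta=3$, i.e.\ $\phi=1$, can be handled by running the decomposition with $\phi$ slightly smaller, or by noting the lemma extends to $\phi=1$. Filling the array $\operatorname{inside}[\cdot]$ takes one pass over the edges, after tagging each vertex with its cluster index. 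Computing all cores $K_X$ takes $O(n)$ time. Building the quotient multigraph $H$ takes $O(m)$ time: label each vertex by its node of $H$ (its core if it lies in one, otherwise itself) and scan the edges, discarding those whose endpoints receive the same label. Lifting the returned cut $Q$ back to $G$ takes $O(n)$ time. Since $G$ is connected, $n\le m+1$, so all of these costs are $m^{1+o(1)}$.

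\emph{Main step.} The dominant cost is $\ThresholdGabow(H,\delta)$, which by Lemma~\ref{lem:gabow} runs in time $O(|E(H)|\cdot\min\{\lambda(H),\delta\})\le O(|E(H)|\,\delta)$. By Theorem~\ref{thm:compression}, $|E(H)|=O(m^{1+o(1)}/\delta)$. Multiplying, the factor $\delta$ cancels and the call costs $O(m^{1+o(1)})$. This cancellation is the key point of the analysis: the expander parameter $\phi=3/\delta$ was chosen exactly so that the intercluster edges, and with them all surviving edges by Lemma~\ref{lem:light-charge}, scale as $1/\delta$, which offsets the threshold $\delta$ in Gabow's algorithm. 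We should also confirm that $H$ is selfloop-free, which holds by construction, so Lemma~\ref{lem:gabow} applies; and that if $H$ has a single vertex, the algorithm returns without calling Gabow.

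\emph{Main obstacle.} Everything hinges on Theorem~\ref{thm:compression}, which is already proven, so the remaining work is the bookkeeping in the first paragraph. The one point needing care is that $m^{1+o(1)}\cdot\delta/\delta$ involves the same $o(1)$ exponent coming from Lemma~\ref{lem:strong-ed}. The final bound is therefore a sum of a constant number of $m^{1+o(1)}$ terms, which is itself $m^{1+o(1)}$ and deterministic, since every subroutine used is deterministic.
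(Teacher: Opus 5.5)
Your proof is correct and is essentially the paper's argument. Every step other than the expander decomposition and the final call is linear, and the bound $|E(H)|=O(m^{1+o(1)}/\delta)$ from Theorem~\ref{thm:compression} cancels the factor $\delta$ in the $O(|E(H)|\min\{\lambda(H),\delta\})$ cost of $\ThresholdGabow(H,\delta)$. Your remark on the boundary case $\phi=1$ (i.e.\ $\delta=3$) covers a detail the paper passes over silently; the cleanest justification of your second fix is that at $\phi=1$ the partition into singletons trivially satisfies both guarantees of Lemma~\ref{lem:strong-ed}.
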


\begin{proof}
Testing 3-edge connectivity, degree computation, and choosing the minimum degree vertex $r$ takes $O(m+n)$ time.  
If $\delta<3$, the running time is thus $O(m+n).$
Assume henceforth that $\delta\ge 3$.

The strong expander decomposition takes $O(m^{1+o(1)})$ time.  Once the partition is known, a single scan of the edge list computes every $d_X(v)$, a scan of the vertices identifies all cores, and one further scan of the edge list constructs the quotient.  These operations, creating $H$, and lifting a returned cut take $O(m+n)=O(m)$ time.

By Lemma~\ref{thm:compression}, $H$ has
$
m'=|E(H)|=O\left(\frac{m^{1+o(1)}}{\delta}\right)
$
edges.  The thresholded Gabow call with threshold $\delta$ therefore takes
\begin{align*}
O\bigl(m'\min\{\lambda(H),\delta\}\bigr)
\le O(m'\delta)=O(m^{1+o(1)}).
\end{align*}
Adding all terms gives $O(m^{1+o(1)})$ and it is clear every step is deterministic.
\end{proof}

\end{document}